%% file: jbhi_main.tex
\documentclass[journal,twoside,web]{ieeecolor}
\usepackage{generic}
\usepackage{cite}
\usepackage{amssymb,amsfonts}
\usepackage{algorithmic}
\usepackage{graphicx}
\usepackage{algorithm,algorithmic}
\usepackage{hyperref}
\usepackage{cite}

\usepackage{balance}

\makeatletter
\let\proof\@undefined
\let\endproof\@undefined
\makeatother
\usepackage{amsthm}

\newtheorem{theorem}{Theorem}

\theoremstyle{break}
\newtheorem{proposition}{Proposition}

\newtheorem{remark}{Remark}
\hypersetup{hidelinks=true}
\usepackage{pifont}
\newcommand{\cmark}{\ding{51}}
\newcommand{\xmark}{\ding{55}}

\usepackage{textcomp}
\usepackage{booktabs}   % \toprule \midrule \bottomrule
\usepackage{multirow}   % \multirow
\usepackage{amsmath}    % 数学符号（可选但建议）
\def\BibTeX{{\rm B\kern-.05em{\sc i\kern-.025em b}\kern-.08em
    T\kern-.1667em\lower.7ex\hbox{E}\kern-.125emX}}
\newcommand{\MedMamba}{\textbf{\texttt{MedMamba}}}

\title{\MedMamba: Recasting Mamba for Medical Time Series Classification}

\begin{document}

\author{
Zhengxiao He,
Huayu Li,
Xiwen Chen,
Janet M. Roveda,
Jinghao Wen,
Siyuan Tian
and Ao Li
\thanks{Z. He, H. Li, JM. Roveda, and A. Li are with the Department of Electrical \& Computer Engineering at the University of Arizona, Tucson, AZ, USA.}
\thanks{J. Wen is with the Department of Electrical and Computer Engineering, Villanova University, Villanova, PA, USA}
\thanks{S. Tian is with the Microsoft Research Asia, Beijing, China.}
\thanks{X. Chen is with Morgan Stanley, NY, USA
}
\thanks{
Corresponding author: Ao Li (e-mail: aoli1@arizona.edu).
}
}

\maketitle

\begin{abstract}
Medical time series, such as electrocardiograms (ECG) and electroencephalograms (EEG), exhibit complex temporal dynamics and structured cross-channel dependencies, posing fundamental challenges for automated analysis. Conventional convolutional and recurrent models struggle to capture long-range dependencies, while Transformer-based approaches incur quadratic complexity and often introduce redundant interactions that are misaligned with the intrinsic structure of physiological signals. To address these limitations, we propose \MedMamba{}, a principle-driven multi-scale bidirectional state space architecture tailored for medical time series classification. Our design is guided by three key inductive biases of physiological signals: spatial centralization, multi-timescale temporal composition, and non-causal contextual dependency. These principles are instantiated through a lightweight channel-mixing module for cross-channel reparameterization, multi-scale convolutional tokenization for temporal decomposition, and bidirectional Mamba blocks for efficient global context modeling with linear complexity. Extensive experiments on six benchmark datasets spanning EEG, ECG, and human activity signals demonstrate that \MedMamba{} consistently outperforms state-of-the-art methods across diverse modalities. Notably, it achieves 85.97\% accuracy on PTB and establishes new state-of-the-art performance on the challenging ADFTD dataset (54.72\% accuracy and 52.01\% F1-score). Strong results on long-sequence benchmarks, such as SleepEDF, further validate its capability in modeling long-range dependencies. Moreover, MedMamba achieves a speedup of 4.6$\times$ in inference, highlighting its practicality for real-time clinical deployment. These results suggest that principle-guided state space modeling offers an effective and scalable alternative to Transformer-based approaches for medical time series analysis.
\end{abstract}

\begin{IEEEkeywords}
Biomedical signal processing, Electroencephalography (EEG), Electrocardiography (ECG), 
Medical time series classification
\end{IEEEkeywords}

\begin{figure}
    \centering
    \includegraphics[width=1.0\linewidth]{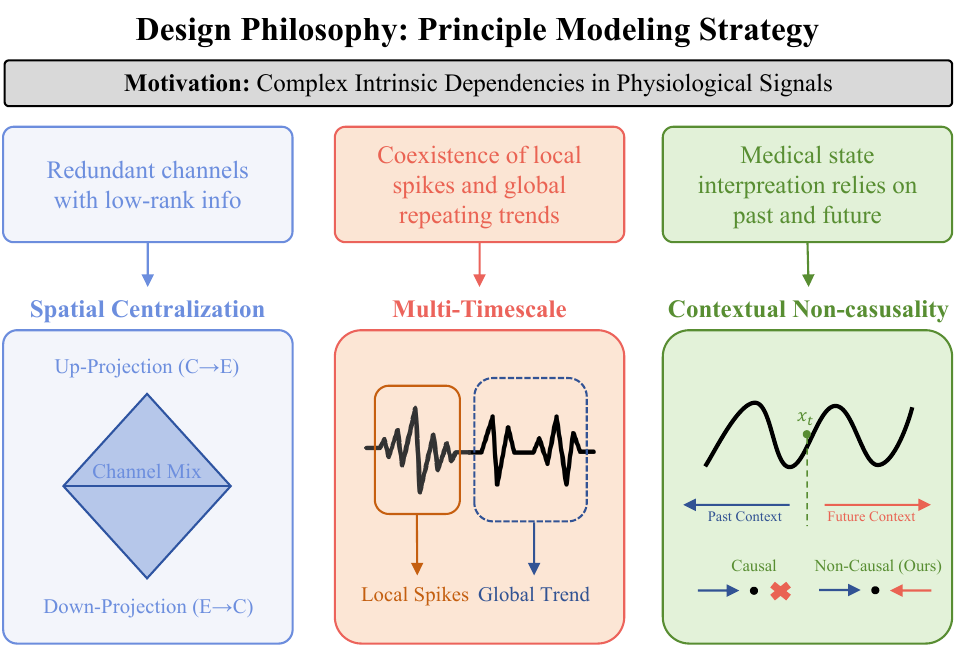}
   \caption{
        \textbf{Principle-driven framework for medical time series modeling.}
        \textbf{Top:} Medical signals inherently exhibit temporal heterogeneity and spatial dependency across channels, necessitating models that jointly capture both dimensions. 
        \textbf{Bottom:} Guided by these principles, we design a structured pipeline: 
        channel mixing for spatial reparameterization, 
        multi-scale embedding for temporal decomposition, 
        and bidirectional Mamba blocks for efficient non-causal context modeling. 
        These components collectively enable comprehensive representation learning across temporal scales and channel interactions.
        }
    \label{fig:teaser}
\end{figure}

\section{Introduction}
\label{sec:introduction}
\vspace{-5pt}
Physiological signals such as electrocardiograms (ECG) and electroencephalograms (EEG) are fundamental diagnostic tools in modern clinical practice, enabling the detection of cardiovascular and neurological disorders~\cite{ribeiro2020automatic,tang2025comprehensive,craik2019deep,lawhern2018eegnet,he2024eeg}. 
However, their interpretation relies heavily on expert knowledge, making the process time-consuming, subjective, and difficult to scale~\cite{hannun2019cardiologist}. 
With the rapid growth of continuous monitoring in clinical and wearable settings, there is an increasing need for automated and efficient analysis of medical time series.

Unlike generic sequential data, physiological signals exhibit distinct structural properties, including \textit{centralized spatial dependency}, \textit{multi-timescale temporal dynamics}, and \textit{non-causal contextual relationships}. 
These characteristics imply that diagnostically relevant information is concentrated in a subset of channels, distributed across multiple temporal scales, and often depends on both past and future context. 
However, most existing sequence modeling approaches are not designed to explicitly account for these properties.

Convolutional neural networks (CNNs)~\cite{acharya2017deep,kiranyaz2015real} effectively capture local morphological patterns but are limited by their receptive field. 
Recurrent neural networks (RNNs) can model temporal dependencies but suffer from vanishing gradients and limited parallelization~\cite{lipton2016learning}. 
More recently, Transformer-based models leverage self-attention to capture global context~\cite{wang2024medformer,zhou2021informer,liu2023itransformer,wu2021autoformer}, but incur quadratic complexity and introduce redundant interactions that are mismatched with the structured nature of physiological signals. 
As a result, existing methods either sacrifice long-range modeling or suffer from high computational cost.

State space models (SSMs) have recently emerged as an efficient alternative for long-sequence modeling~\cite{gu2021efficiently}. 
In particular, Mamba~\cite{gu2024mamba} introduces input-dependent selective scanning, enabling adaptive sequence modeling with linear complexity. 
Despite these advantages, existing Mamba-based approaches largely treat medical time series as generic sequences and do not explicitly incorporate their domain-specific structure, limiting their effectiveness in clinical applications.

To address these limitations, we propose \textbf{\MedMamba}, a principle-driven state space architecture tailored for medical time series classification. 
Our design is derived from three key inductive biases of physiological signals: (i) spatial centralization, (ii) multi-timescale temporal composition, and (iii) non-causal dependency. 
These principles are instantiated through a channel mixing module for cross-channel reparameterization, multi-scale convolutional tokenization for temporal decomposition, and bidirectional Mamba blocks for efficient global context modeling.

Extensive experiments on six benchmark datasets spanning EEG, ECG, and human activity signals demonstrate that \MedMamba{} consistently outperforms state-of-the-art methods across diverse modalities, while achieving linear computational complexity and improved inference efficiency. This work makes the following \textbf{contributions}:
\begin{enumerate}
\item We characterize the structural properties of medical time series, highlighting their centralized, multi-scale, and non-causal nature.
\item We propose \textbf{\MedMamba}, which, to the best of our knowledge, is the first principle-driven state space architecture integrating channel mixing, multi-scale modeling, and bidirectional inference.
\item We achieve efficient and scalable long-sequence modeling with linear complexity.
\item We demonstrate consistent performance gains across multiple medical benchmarks.
\end{enumerate}

\section{Related Works}
\label{sec:related}

\subsection{Deep Learning for Medical Time Series Classification}

Deep learning has become the dominant paradigm for medical time series classification~\cite{sun2020review}. 
Early approaches are primarily based on convolutional neural networks (CNNs), which effectively capture local morphological patterns in physiological signals, as demonstrated in ECG analysis~\cite{acharya2017deep,kiranyaz2015real} and EEG modeling~\cite{lawhern2018eegnet}. 
However, their limited receptive field restricts the modeling of long-range temporal dependencies, which are critical in applications such as sleep staging and disease progression.

To address this limitation, recurrent neural networks (RNNs), particularly long short-term memory (LSTM) models, were introduced to model temporal dependencies~\cite{lipton2016learning}. 
While capable of capturing sequential information, they suffer from vanishing gradients and limited parallelization, making them less suitable for long sequences.

More recently, Transformer-based models have been applied to medical time series, leveraging attention mechanisms to model global dependencies. 
Variants such as Informer~\cite{zhou2021informer}, Autoformer~\cite{wu2021autoformer},  iTransformer~\cite{liu2023itransformer} and Medformer~\cite{wang2024medformer} have shown strong performance. 
However, their quadratic complexity and dense token interactions introduce substantial computational overhead and redundancy for physiological signals.

\subsection{State Space Models for Time Series}

State space models (SSMs)~\cite{gu2021efficiently} have recently gained attention for their efficiency and ability to capture long-range dependencies. 
Mamba~\cite{gu2024mamba} further improves SSMs through input-dependent selective scanning, enabling adaptive sequence modeling.

Although Mamba-based methods have achieved promising results in general time series tasks~\cite{wang2025mamba,liu2026gcmnet}, their application to medical time series remains limited. 
Existing works such as ECG-Mamba~\cite{jiang2025ecg} largely treat physiological signals as generic sequences without considering their domain-specific properties.

In particular, medical time series exhibit multi-scale temporal dynamics and structured physiological patterns, which are not explicitly modeled in existing Mamba-based approaches, limiting their effectiveness in clinical settings.

\begin{figure*}[t]
    \centering
    \includegraphics[width=\textwidth]{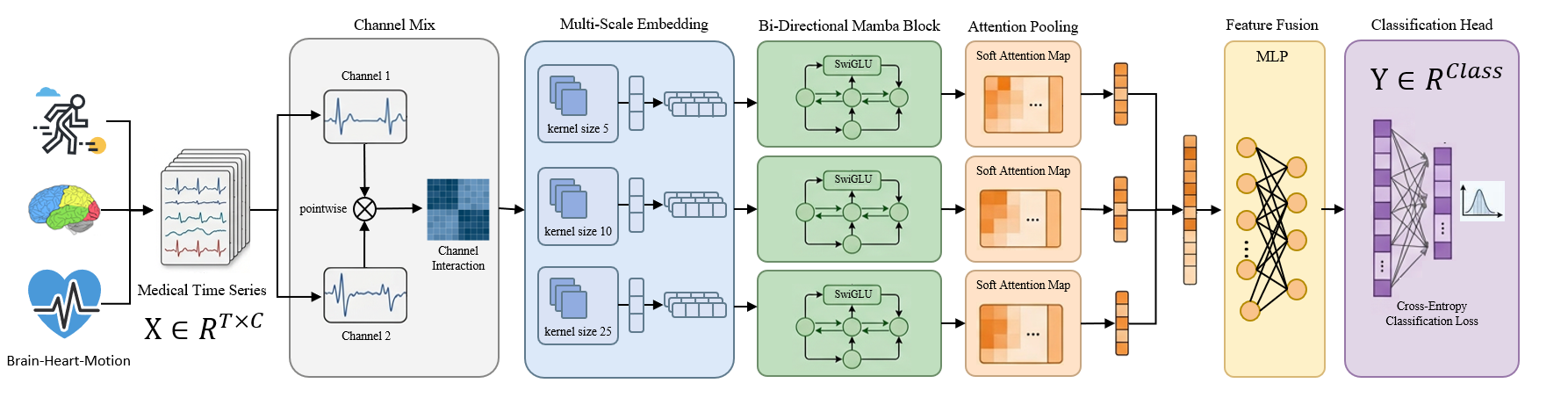}
   \caption{
Overview of \textbf{\MedMamba }, a multi-scale bidirectional Mamba architecture for medical time series analysis.
The framework first performs channel-aware preprocessing to capture cross-channel interactions in physiological signals.
It then employs multi-scale convolutional embeddings to encode temporal dynamics at different resolutions, enabling simultaneous modeling of fast local variations and slow global trends.
Bidirectional Mamba blocks are applied at each scale to efficiently capture long-range dependencies with linear complexity.
Finally, an attention-based fusion module integrates multi-scale representations for robust classification.
}\label{fig:model_teaser}
\end{figure*}

\section{Method}
\label{sec:method}
This section presents \MedMamba{} from a principle-driven perspective. Rather than describing the model as a flat sequence of engineering modules, we start with the structural properties of physiological signals and then show how each architectural component is derived from these properties. The resulting presentation follows the logic
\emph{data structure, inductive bias, and architectural instantiation}.

% ================================================================
\subsection{Design Principles}
\label{sec:design_principles}

Medical time series are not generic token sequences. Their statistical structure suggests three inductive biases that should be explicitly reflected in the architecture.

\textbf{P1: Centralized spatial dependency.} In multichannel physiological recordings, diagnostically relevant variation is often concentrated in a relatively small spatial subspace. In practice, this means that a subset of channels or latent physiological sources carries disproportionate explanatory power, while the remaining channels are partially redundant or noisy. A suitable encoder should therefore reparameterize the channel space before temporal modeling instead of deferring all cross-channel interaction to the end.

\textbf{P2: Multi-timescale temporal composition.} Clinical evidence is distributed across distinct temporal scales. Short-lived events such as QRS complexes, epileptic spikes, or movement bursts coexist with slower rhythms such as heart-rate variability and sleep-stage transitions. A single tokenization scale therefore induces an avoidable resolution mismatch between the model and the signal.

\textbf{P3: Non-causal contextual dependency.} Many clinical patterns are defined not only by what happens before a time point, but also by what happens after it. This is especially true for symmetric, phase-dependent, or boundary-sensitive waveforms. For offline classification, a strictly causal temporal encoder is thus structurally suboptimal.

Taken together, these observations motivate a \emph{channel-first, multi-rate, bidirectional state-space architecture}: the model should first refine the channel basis, then expose the signal at multiple temporal resolutions, and finally propagate information in both temporal directions using linear-time state-space dynamics.

% ================================================================
\subsection{Problem Formulation}
\label{sec:formulation}

Given a multi-channel physiological recording $\mathbf{X} \in \mathbb{R}^{L \times C}$, where $L$ denotes the number of time steps and $C$ denotes the number of channels, the goal is to learn a mapping
\begin{equation}
    f: \mathbb{R}^{L \times C} \rightarrow \{1,2,\ldots,K\}
\end{equation}
that assigns the input to one of $K$ diagnostic categories.

The task contains two coupled challenges. First, the encoder must model \emph{temporal heterogeneity}: both short-range morphologies and long-range dependencies must be preserved. Second, it must model \emph{spatial dependency}: the predictive signal is not factorized across channels because ECG leads, EEG electrodes, and IMU axes are physiologically coupled. \MedMamba{} addresses these challenges by factorizing the representation pipeline into three stages:
To address the intrinsic challenges of medical time series, including spatial dependency, temporal heterogeneity, and non-causal context, we factorize the representation learning process into a sequence of principle-driven stages. Given the input $\mathbf{X}$, we first apply robust preprocessing to obtain $\tilde{\mathbf{X}}$, followed by spatial reparameterization to produce $\mathbf{X}'$. The reparameterized signal is then encoded through multi-timescale temporal modeling, resulting in a set of scale-specific representations $\{\mathbf{H}^{(m)}\}_{m=1}^{M}$. 
Finally, these representations are aggregated across scales to produce the prediction $\hat{\mathbf{y}}$. 
Specifically, the multi-scale features $\{\mathbf{H}^{(m)}\}_{m=1}^{M}$ are concatenated along the feature dimension via $\bigoplus$, forming a unified representation that captures both short- and long-term temporal patterns. 
The aggregation function $\phi(\cdot)$ further performs feature transformation and dimensionality alignment, enabling effective cross-scale interaction. 
The final prediction is obtained through a linear classifier:
\begin{equation}
\hat{y} = \arg\max_{k}\;
\mathbf{W}\,\phi\!\left(
\bigoplus_{m=1}^{M} \mathbf{H}^{(m)}
\right) + \mathbf{b},
\end{equation}
where $\mathbf{W}$ and $\mathbf{b}$ denote the learnable parameters of the classifier.
% ================================================================
\subsection{From Principles to Architecture}
\label{sec:overview}

The overall design of MedMamba is illustrated in Fig.~\ref{fig:model_teaser}. For clarity, the three principles introduced above are instantiated through five computational stages. In the \textbf{first stage}, robust input modeling is achieved through channel dropout, which regularizes the model against sensor corruption or missing leads while preserving temporal coherence within each channel. In the \textbf{second stage}, spatial reparameterization is performed by a lightweight channel-mixing block operating along the channel axis at each time step, thereby reducing redundancy and aligning the representation with dominant physiological subspaces. The \textbf{third stage} introduces multi-rate tokenization, where parallel convolutional stems with different strides expose the signal on multiple temporal lattices and mitigate the resolution mismatch between token scale and event duration. In the \textbf{fourth stage}, each scale is processed by bidirectional Mamba blocks that integrate local convolutional refinement, selective state-space scanning, and gated residual updates. Finally, in the \textbf{fifth stage}, cross-scale evidence aggregation is carried out through attention pooling, which compresses each scale into a fixed-length descriptor before these descriptors are fused for classification. This decomposition makes the information flow explicit: \MedMamba{} first reduces spatial redundancy, then constructs scale-specific temporal summaries, and ultimately aggregates diagnostic evidence across scales.

% ================================================================
\subsection{Robust Input Modeling via Channel Dropout}
\label{sec:preprocessing}

To improve robustness to missing or corrupted sensors, we apply channel-level dropout before feature extraction. Each channel is independently dropped with probability $p_{\mathrm{ch}}$, and the remaining channels are rescaled:
\begin{equation}
    \tilde{\mathbf{X}}_{:,c} =
    \begin{cases}
        \dfrac{\hat{\mathbf{X}}_{:,c}}{1-p_{\mathrm{ch}}}, & \text{if } m_c=1, \\
        \mathbf{0}, & \text{if } m_c=0,
    \end{cases}
\end{equation}
where $m_c \sim \mathrm{Bernoulli}(1-p_{\mathrm{ch}})$ is shared across time steps.

This design preserves channel-wise temporal structure while improving robustness to sensor-level perturbations.

% ================================================================
% ================================================================
\subsection{Principle I: Spatial Centralization via Channel Mixing}
\label{sec:channel_mix}

After robust preprocessing, \MedMamba{} applies a channel-mixing layer to model the cross-channel structure before temporal encoding. For each time step, the layer performs a two-layer MLP with residual connection:
\begin{equation}
    \mathbf{X}' = \tilde{\mathbf{X}} + \mathrm{MLP}\!\left(\mathrm{LN}(\tilde{\mathbf{X}})\right),
\end{equation}
where
\begin{equation}
\mathrm{MLP}(\mathbf{h}) = \mathbf{W}_2\,\mathrm{Dropout}\!\left(\mathrm{GELU}(\mathbf{W}_1\mathbf{h})\right),
\end{equation}
with
\begin{equation}
\mathbf{W}_1 \in \mathbb{R}^{\rho C' \times C'}, \qquad
\mathbf{W}_2 \in \mathbb{R}^{C' \times \rho C'}.
\end{equation}
Here $\rho$ is the expansion ratio and LN denotes LayerNorm.

The role of this block is not merely to mix features. It implements a \emph{spatial reparameterization}: the model learns a better channel basis before any long-range temporal reasoning is performed. This is advantageous when the original channel coordinates are overspecified relative to the intrinsic dimensionality of the underlying physiology.

\begin{proposition}[Noise suppression under centralized channel structure]
\label{prop:noise_suppression}
Assume that at a fixed time step the channel vector admits the decomposition
\begin{equation}
    \mathbf{x}_t = \mathbf{s}_t + \mathbf{n}_t,
\end{equation}
where $\mathbf{s}_t \in \mathcal{U}$ lies in an informative subspace $\mathcal{U} \subseteq \mathbb{R}^{C}$ and $\mathbf{n}_t \in \mathcal{U}^{\perp}$ is nuisance energy. Let $\mathbf{M}_t$ denote the local linearization of the channel-mixing map at $\mathbf{x}_t$. If $\mathbf{M}_t$ approximately preserves $\mathcal{U}$ while contracting $\mathcal{U}^{\perp}$, namely
\begin{equation}
    \|\mathbf{M}_t\mathbf{s}_t\|_2 \geq (1-\epsilon)\|\mathbf{s}_t\|_2,
    \qquad
    \|\mathbf{M}_t\mathbf{n}_t\|_2 \leq \gamma \|\mathbf{n}_t\|_2
\end{equation}
for some $\epsilon \in [0,1)$ and $\gamma \in [0,1)$, then the nuisance-to-signal ratio is reduced after mixing:
\begin{equation}
    \frac{\|\mathbf{M}_t\mathbf{n}_t\|_2}{\|\mathbf{M}_t\mathbf{s}_t\|_2}
    \leq
    \frac{\gamma}{1-\epsilon}
    \frac{\|\mathbf{n}_t\|_2}{\|\mathbf{s}_t\|_2}.
\end{equation}
Whenever $\gamma < 1-\epsilon$, the mixed representation is strictly more concentrated on the informative subspace.
\end{proposition}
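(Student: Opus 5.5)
The argument is a direct ratio bound built from the two hypotheses. We divide the upper bound on the transformed nuisance by the lower bound on the transformed signal. No earlier result in the paper is needed; only elementary inequalities are used.

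\textbf{Step 1: nondegeneracy.} We first ensure the ratios are well defined. The right-hand side involves $\|\mathbf{s}_t\|_2$ in a denominator, so the statement implicitly requires $\mathbf{s}_t \neq \mathbf{0}$, and we assume this. Since $\epsilon<1$, the preservation hypothesis then gives
\[
\|\mathbf{M}_t\mathbf{s}_t\|_2 \geq (1-\epsilon)\|\mathbf{s}_t\|_2 > 0,
\]
so the left-hand denominator is strictly positive.

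\textbf{Step 2: the main inequality.} We take reciprocals of the lower bound, which is legitimate because both sides are positive:
\[
\frac{1}{\|\mathbf{M}_t\mathbf{s}_t\|_2} \leq \frac{1}{(1-\epsilon)\|\mathbf{s}_t\|_2}.
\]
We then multiply this by the nonnegative quantity $\|\mathbf{M}_t\mathbf{n}_t\|_2$. Next we apply the contraction hypothesis $\|\mathbf{M}_t\mathbf{n}_t\|_2 \leq \gamma\|\mathbf{n}_t\|_2$, which is valid because the factor $1/((1-\epsilon)\|\mathbf{s}_t\|_2)$ is positive. Chaining the two inequalities gives exactly
\[
\frac{\|\mathbf{M}_t\mathbf{n}_t\|_2}{\|\mathbf{M}_t\mathbf{s}_t\|_2} \leq \frac{\gamma}{1-\epsilon}\,\frac{\|\mathbf{n}_t\|_2}{\|\mathbf{s}_t\|_2}.
\]

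\textbf{Step 3: the strict-concentration claim.} If $\gamma<1-\epsilon$, the factor $\gamma/(1-\epsilon)$ is strictly less than one. Hence, whenever $\mathbf{n}_t\neq\mathbf{0}$, the nuisance-to-signal ratio strictly decreases. If $\mathbf{n}_t=\mathbf{0}$, both sides are zero and the claim is trivial.

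There is one point of interpretation. After mixing, $\mathbf{M}_t\mathbf{s}_t$ and $\mathbf{M}_t\mathbf{n}_t$ need not remain in $\mathcal{U}$ and $\mathcal{U}^{\perp}$. I would therefore read ``more concentrated on the informative subspace'' as the decrease of this ratio between the images of the signal and nuisance components, which is what the displayed inequality asserts. I would state this reading in a remark rather than attempt a projection-based claim.

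The main obstacle is not technical but definitional. It consists of making explicit the nondegeneracy assumption $\mathbf{s}_t\neq\mathbf{0}$ and the interpretation of ``concentration.'' Once these are fixed, the inequality follows in one line.
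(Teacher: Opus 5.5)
Your proof is correct and follows the paper's argument: you divide the upper bound $\|\mathbf{M}_t\mathbf{n}_t\|_2 \leq \gamma\|\mathbf{n}_t\|_2$ by the lower bound $\|\mathbf{M}_t\mathbf{s}_t\|_2 \geq (1-\epsilon)\|\mathbf{s}_t\|_2$. Your explicit handling of what the paper leaves implicit is a useful addition: the assumption $\mathbf{s}_t\neq\mathbf{0}$, which with $\epsilon<1$ makes the left-hand denominator positive, and the observation that strict reduction needs $\mathbf{n}_t\neq\mathbf{0}$ (when $\mathbf{n}_t=\mathbf{0}$ the ratio is $0$ before and after, so it is not strictly reduced).
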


\begin{proof}
The result follows directly from the two norm inequalities. The numerator contracts by at most $\gamma$, whereas the denominator contracts by at most $(1-\epsilon)$. Dividing the two bounds yields the stated inequality.
\end{proof}

The proposition above characterizes channel mixing from an energy perspective. We now complement it with an information-theoretic interpretation. Let $\mathbf{z}_t = g(\mathbf{x}_t)$ denote the output of the channel-mixing block at time $t$, and let $y$ be the diagnostic label. Since $\mathbf{z}_{1:L}$ is a deterministic transformation of $\mathbf{x}_{1:L}$, the data processing inequality implies
\begin{equation}
    I(\mathbf{z}_{1:L}; y) \leq I(\mathbf{x}_{1:L}; y).
\end{equation}
Hence, channel mixing does not create new label information. Its role is instead to \emph{reorganize} the channel space so that task-relevant information is concentrated into a lower-redundancy representation that is easier for the downstream temporal encoder to exploit under finite model capacity.

To make this precise, we analyze the channel mixer locally around a working point $\mathbf{x}_t^\star$. Let
\begin{equation}
    g(\mathbf{x}_t) \approx g(\mathbf{x}_t^\star) + \mathbf{M}_t(\mathbf{x}_t-\mathbf{x}_t^\star),
\end{equation}
where $\mathbf{M}_t = J_g(\mathbf{x}_t^\star)$ is the Jacobian of the nonlinear channel-mixing map. Ignoring the constant offset, we obtain the local model
\begin{equation}
    \mathbf{z}_t \approx \mathbf{M}_t\mathbf{s}_t + \mathbf{M}_t\mathbf{n}_t + \boldsymbol{\xi}_t,
\end{equation}
where $\boldsymbol{\xi}_t \sim \mathcal{N}(\mathbf{0},\sigma^2\mathbf{I})$ is an auxiliary readout noise introduced for analysis. Assume further that
\begin{equation}
    \mathbf{s}_t \sim \mathcal{N}(\mathbf{0},\mathbf{\Sigma}_s), 
    \qquad
    \mathbf{n}_t \sim \mathcal{N}(\mathbf{0},\mathbf{\Sigma}_n),
    \qquad
    \mathbf{s}_t \perp \mathbf{n}_t.
\end{equation}
Then the mutual information between the mixed representation and the informative component is
\begin{equation}
    I(\mathbf{z}_t;\mathbf{s}_t)
    =
    \frac{1}{2}
    \log
    \frac{
    \det\!\big(\mathbf{M}_t(\mathbf{\Sigma}_s+\mathbf{\Sigma}_n)\mathbf{M}_t^\top+\sigma^2\mathbf{I}\big)
    }{
    \det\!\big(\mathbf{M}_t\mathbf{\Sigma}_n\mathbf{M}_t^\top+\sigma^2\mathbf{I}\big)
    }.
\end{equation}
Equivalently,
\begin{equation}
    I(\mathbf{z}_t;\mathbf{s}_t)
    =
    \frac{1}{2}
    \log
    \det\!\left[
        \mathbf{I}
        +
        \big(\mathbf{M}_t\mathbf{\Sigma}_n\mathbf{M}_t^\top+\sigma^2\mathbf{I}\big)^{-1}
        \mathbf{M}_t\mathbf{\Sigma}_s\mathbf{M}_t^\top
    \right].
\end{equation}
This expression shows that the effectiveness of channel mixing is governed by the \emph{generalized signal-to-noise ratio} after projection: a good mixer is one that preserves directions carrying label-relevant physiological variation while suppressing nuisance covariance.

\begin{proposition}[Optimal channel mixing under generalized signal-to-noise ratio]
Consider the class of rank-$r$ local linear mixers $\mathbf{M}$ satisfying the whitening constraint
\begin{equation}
    \mathbf{M}(\mathbf{\Sigma}_n+\sigma^2\mathbf{I})\mathbf{M}^\top=\mathbf{I}_r.
\end{equation}
Define the whitened signal covariance
\begin{equation}
    \mathbf{K}
    =
    (\mathbf{\Sigma}_n+\sigma^2\mathbf{I})^{-1/2}
    \mathbf{\Sigma}_s
    (\mathbf{\Sigma}_n+\sigma^2\mathbf{I})^{-1/2}.
\end{equation}
Then the mixer maximizing $I(\mathbf{z}_t;\mathbf{s}_t)$ is obtained by projecting onto the top-$r$ eigendirections of $\mathbf{K}$. In particular, the optimal value is
\begin{equation}
    \max_{\mathbf{M}} I(\mathbf{z}_t;\mathbf{s}_t)
    =
    \frac{1}{2}\sum_{i=1}^{r}\log(1+\lambda_i(\mathbf{K})),
\end{equation}
where $\lambda_1(\mathbf{K}) \geq \cdots \geq \lambda_r(\mathbf{K})$ are the largest eigenvalues of $\mathbf{K}$.
\end{proposition}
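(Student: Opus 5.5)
The plan is to reduce the problem to maximizing a log-determinant over orthonormal frames, and then apply the Poincaré separation (Cauchy interlacing) theorem.

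\textbf{Step 1: fix the noise model.} First I will make the noise model consistent with the whitening constraint. The constraint $\mathbf{M}(\mathbf{\Sigma}_n+\sigma^2\mathbf{I})\mathbf{M}^\top=\mathbf{I}_r$ matches the mutual-information formula only if the auxiliary noise $\boldsymbol{\xi}_t$ enters before the projection. So I will read the local model as $\mathbf{z}_t=\mathbf{M}(\mathbf{s}_t+\mathbf{n}_t+\boldsymbol{\xi}_t)$. The conditional covariance of $\mathbf{z}_t$ given $\mathbf{s}_t$ is then $\mathbf{M}(\mathbf{\Sigma}_n+\sigma^2\mathbf{I})\mathbf{M}^\top$, which the constraint fixes to $\mathbf{I}_r$.

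This mismatch is the main obstacle. With post-projection noise $\mathbf{M}\mathbf{\Sigma}_n\mathbf{M}^\top+\sigma^2\mathbf{I}$, the constraint would not whiten the denominator, and the clean eigenvalue formula would fail in general. I will state this interpretation explicitly; it is equivalent to absorbing $\sigma^2\mathbf{I}$ into the nuisance covariance.

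Under this reading, the Gaussian formula stated before the proposition becomes
\begin{equation*}
I(\mathbf{z}_t;\mathbf{s}_t)=\tfrac12\log\det\!\big(\mathbf{I}_r+\mathbf{M}\mathbf{\Sigma}_s\mathbf{M}^\top\big).
\end{equation*}
Since the constraint forces $\mathbf{M}$ to have full row rank $r$, I also need $r\le C$.

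\textbf{Step 2: change variables.} Let $\mathbf{N}=\mathbf{\Sigma}_n+\sigma^2\mathbf{I}$, which is positive definite because $\sigma^2>0$. Write $\mathbf{M}=\mathbf{W}\mathbf{N}^{-1/2}$ with $\mathbf{W}\in\mathbb{R}^{r\times C}$. The constraint becomes $\mathbf{W}\mathbf{W}^\top=\mathbf{I}_r$, so $\mathbf{W}$ has orthonormal rows. The objective becomes
\begin{equation*}
\tfrac12\log\det\!\big(\mathbf{I}_r+\mathbf{W}\mathbf{K}\mathbf{W}^\top\big),
\end{equation*}
with $\mathbf{K}$ the whitened signal covariance, which is symmetric positive semidefinite. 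The map $\mathbf{M}\mapsto\mathbf{W}$ is a bijection between feasible mixers and $r\times C$ matrices with orthonormal rows, so the two maximization problems are equivalent.

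\textbf{Step 3: upper bound.} Let $\mu_1\ge\cdots\ge\mu_r$ be the eigenvalues of $\mathbf{W}\mathbf{K}\mathbf{W}^\top$, which is a compression of $\mathbf{K}$ onto an $r$-dimensional subspace. Then
\begin{equation*}
\log\det(\mathbf{I}_r+\mathbf{W}\mathbf{K}\mathbf{W}^\top)=\sum_{i=1}^r\log(1+\mu_i).
\end{equation*}
By the Poincaré separation theorem, $\mu_i\le\lambda_i(\mathbf{K})$ for each $i$. Because $\log(1+\cdot)$ is increasing on $[0,\infty)$, the objective is bounded by $\tfrac12\sum_{i=1}^r\log(1+\lambda_i(\mathbf{K}))$.

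\textbf{Step 4: attainment.} Take $\mathbf{W}=\mathbf{U}_r^\top$, where $\mathbf{U}_r$ holds the top-$r$ orthonormal eigenvectors of $\mathbf{K}$. Then $\mathbf{W}\mathbf{K}\mathbf{W}^\top=\mathrm{diag}(\lambda_1,\dots,\lambda_r)$, so the bound is achieved. The optimal mixer is $\mathbf{M}^\star=\mathbf{U}_r^\top\mathbf{N}^{-1/2}$, possibly left-multiplied by any $r\times r$ orthogonal matrix, which leaves the objective unchanged. This is the claimed projection onto the top-$r$ eigendirections of $\mathbf{K}$ after whitening.
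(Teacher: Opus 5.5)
Your proposal is correct and follows essentially the paper's route: the whitening substitution $\mathbf{M}=\mathbf{U}^\top(\mathbf{\Sigma}_n+\sigma^2\mathbf{I})^{-1/2}$, reduction to maximizing $\tfrac12\log\det(\mathbf{I}_r+\mathbf{U}^\top\mathbf{K}\mathbf{U})$ over orthonormal frames, and attainment at the top-$r$ eigenvectors of $\mathbf{K}$. Your Poincar\'e separation step (termwise $\mu_i\le\lambda_i(\mathbf{K})$, then monotonicity of $\log(1+\cdot)$) replaces the paper's one-line appeal to the Ky Fan maximum principle. Your Step~1 also improves on the paper: the paper substitutes into its post-projection (readout) noise formula without comment, but there the constraint does not reduce $\det(\mathbf{M}\mathbf{\Sigma}_n\mathbf{M}^\top+\sigma^2\mathbf{I}_r)$ to $1$ (e.g.\ $\mathbf{\Sigma}_n=\mathbf{0}$ gives $\sigma^{2r}$ and optimal value $\tfrac12\sum_{i=1}^{r}\log(1+\lambda_i(\mathbf{K})/\sigma^2)$), so reading $\boldsymbol{\xi}_t$ as entering before $\mathbf{M}$ is exactly what makes the stated formula hold.
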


\begin{proof}
Under the whitening constraint, any feasible $\mathbf{M}$ can be written as
\begin{equation}
    \mathbf{M}
    =
    \mathbf{U}^\top(\mathbf{\Sigma}_n+\sigma^2\mathbf{I})^{-1/2},
    \qquad
    \mathbf{U}^\top\mathbf{U}=\mathbf{I}_r.
\end{equation}
Substituting this form into the mutual-information objective yields
\begin{equation}
    I(\mathbf{z}_t;\mathbf{s}_t)
    =
    \frac{1}{2}
    \log\det\!\left(\mathbf{I}_r+\mathbf{U}^\top\mathbf{K}\mathbf{U}\right).
\end{equation}
By the Ky Fan maximum principle, the quantity $\log\det(\mathbf{I}_r+\mathbf{U}^\top\mathbf{K}\mathbf{U})$ is maximized when the columns of $\mathbf{U}$ span the principal $r$-dimensional eigensubspace of $\mathbf{K}$. Substituting the optimal $\mathbf{U}$ gives
\begin{equation}
    \max_{\mathbf{M}} I(\mathbf{z}_t;\mathbf{s}_t)
    =
    \frac{1}{2}\sum_{i=1}^{r}\log(1+\lambda_i(\mathbf{K})).
\end{equation}
This proves the claim.
\end{proof}

The proposition yields a concrete interpretation of spatial centralization. When the informative covariance is concentrated in a small number of dominant generalized eigendirections, most task-relevant information can be preserved in a low-dimensional mixed representation, while nuisance variation is discarded. In other words, the ideal channel mixer behaves like an adaptive, task-aware spatial filter that first whitens nuisance structure and then aligns the representation with the most informative physiological subspace.

This information-theoretic view is also consistent with Proposition \ref{prop:noise_suppression}. If the local Jacobian approximately preserves the informative subspace while contracting the nuisance subspace, then the effective post-mixing signal-to-noise ratio is increased, which in turn increases the mutual information between the mixed representation and the informative latent component. Thus, the energy-based and information-based perspectives describe the same phenomenon from complementary angles: channel mixing improves the concentration, separability, and usability of physiological information before long-range temporal modeling.

Although the implemented channel mixer is nonlinear, the above derivation formalizes its local effect: around a working point, the residual MLP behaves like an adaptive spatial filter that can preserve dominant physiological directions while attenuating channel-specific nuisance variation.

% ================================================================
\subsection{Principle II: Multi-Rate Temporal Decomposition}
\label{sec:patch_embed}

To model temporally heterogeneous evidence, we tokenize the signal using $M$ parallel convolutional stems with different kernel sizes and strides. Let $\mathcal{S}=\{s_1,s_2,\ldots,s_M\}$ denote the stride set. The $m$-th stem applies a 1-D convolution with kernel size $s_m$ and stride $s_m$, followed by batch normalization and GELU activation:
\begin{equation}
    \mathbf{P}^{(m)} = \mathrm{GELU}\!\left(\mathrm{BN}\!\left(\mathrm{Conv1D}_{s_m}(\mathbf{X}')\right)\right) \in \mathbb{R}^{L_m \times D},
\end{equation}
where
\begin{equation}
    L_m = \left\lfloor \frac{L-s_m}{s_m} \right\rfloor + 1
\end{equation}
is the number of tokens at scale $m$. A learnable positional embedding $\mathbf{E}^{(m)} \in \mathbb{R}^{L_m \times D}$ is then added:
\begin{equation}
    \mathbf{H}^{(m)}_0 = \mathbf{P}^{(m)} + \mathbf{E}^{(m)}.
\end{equation}

This construction can be interpreted as a \emph{learnable multi-rate analysis bank}. Small strides retain high-resolution morphology, whereas large strides compress the signal and emphasize coarse temporal organization. The purpose is not only better feature diversity, but also better alignment between token scale and event duration.

\begin{proposition}[Worst-case scale mismatch under multi-scale tokenization]
Let $s_1 < s_2 < \cdots < s_M$ and define the log-scale mismatch of an event with temporal support $\tau \in [s_1,s_M]$ by
\begin{equation}
    d(\tau;\mathcal{S}) = \min_{1 \leq m \leq M} \left|\log \tau - \log s_m\right|.
\end{equation}
Then the worst-case mismatch satisfies
\begin{equation}
    \max_{\tau \in [s_1,s_M]} d(\tau;\mathcal{S})
    = \frac{1}{2}\max_{1 \leq m < M} \log\!\left(\frac{s_{m+1}}{s_m}\right).
\end{equation}
Therefore, denser scale coverage on the log-time axis strictly improves the worst-case alignment between event duration and token scale.
\end{proposition}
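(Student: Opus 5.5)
The plan is to change variables to the log-time axis, where the statement reduces to a one-dimensional covering problem. Set $u = \log \tau$ and $a_m = \log s_m$. Then $a_1 < a_2 < \cdots < a_M$, because $\log$ is strictly increasing. As $\tau$ ranges over $[s_1,s_M]$, $u$ ranges over $[a_1,a_M]$. In these coordinates $d(\tau;\mathcal{S}) = \min_m |u - a_m|$, which is the distance from $u$ to the finite set $A=\{a_1,\ldots,a_M\}$. The claim is therefore that the maximal distance from a point of $[a_1,a_M]$ to $A$ equals half the largest gap $\max_m (a_{m+1}-a_m)$.

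Next I would partition $[a_1,a_M]$ into the consecutive intervals $[a_m,a_{m+1}]$, $1\le m<M$, and work on one interval at a time. The key observation is that for $u\in[a_m,a_{m+1}]$ the nearest point of $A$ is one of the two endpoints. Any $a_j$ with $j<m$ satisfies $|u-a_j| \ge u-a_m$, and any $a_j$ with $j>m+1$ satisfies $|u-a_j| \ge a_{m+1}-u$. On this interval, then,
\begin{equation*}
\min_j |u-a_j| = \min(u-a_m,\; a_{m+1}-u).
\end{equation*}
This tent function attains its maximum $\tfrac12(a_{m+1}-a_m)$ exactly at the midpoint $u=\tfrac12(a_m+a_{m+1})$, which lies in the interval. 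The maximum over $[a_1,a_M]$ is the maximum over $m$ of these interval maxima, so it equals $\tfrac12\max_m (a_{m+1}-a_m)$. Translating back gives $\tfrac12\max_m \log(s_{m+1}/s_m)$. The maximum is attained, at $\tau=\sqrt{s_m s_{m+1}}$, so writing $\max$ rather than $\sup$ is justified.

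Finally, for the concluding remark I would interpret ``denser coverage'' as inserting an additional scale $s'$ strictly inside a gap $(s_m,s_{m+1})$. This splits that gap into two strictly smaller log-gaps and leaves the others unchanged. The largest gap therefore never increases. It strictly decreases when the refined gap was the unique maximal one, or more generally when every maximal gap is refined.

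I expect the main obstacle to be stating the word ``strictly'' correctly, since refining a non-maximal gap leaves the worst-case mismatch unchanged. The rest is an elementary nearest-point argument, whose only delicate step is the reduction to the two neighbouring endpoints; that step uses the ordering of the $a_j$.
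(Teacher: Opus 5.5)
Your proof is correct and follows the same route as the paper's: pass to the log-time axis, read $d(\tau;\mathcal{S})$ as the distance from $\log\tau$ to the finite set $\{\log s_m\}_{m=1}^{M}$, and note that the worst case sits at the midpoint of the largest adjacent gap; your endpoint-reduction step and the attainment at $\tau=\sqrt{s_m s_{m+1}}$ spell out what the paper asserts in one line. Your caveat on the final ``strictly improves'' clause is also correct and goes beyond the paper, whose proof does not address it: adding scales lowers the worst-case mismatch only when every maximal log-gap is split, and otherwise leaves it unchanged.
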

\begin{proof}
On the log-time axis, the candidate scales are the points $\{\log s_m\}_{m=1}^{M}$. The farthest event from its nearest scale lies at the midpoint of the largest adjacent interval, hence the worst-case nearest-neighbor distance equals half the largest gap. Exponentiating the gap yields the stated expression.
\end{proof}

The proposition provides a concrete interpretation of why multi-scale tokenization is preferable to any single fixed stride: it reduces the maximum representation mismatch over a range of clinically relevant temporal supports.

\begin{theorem}[Linear-time complexity under multi-scale bidirectional scanning]
Let $D_{\mathrm{inner}} = \alpha D$ and let $N$ denote the SSM state dimension. For stride set $\mathcal{S} = \{s_m\}_{m=1}^{M}$ with patch lengths $L_m = \lfloor (L-s_m)/s_m \rfloor + 1$, the total complexity of the bidirectional selective scans across all scales is
\begin{equation}
    \mathcal{O}\!\left(\sum_{m=1}^{M} L_m D_{\mathrm{inner}} N\right).
\end{equation}
Moreover, since $L_m \leq L/s_m + 1$,
\begin{equation}
    \mathcal{O}\!\left(\sum_{m=1}^{M} L_m D_{\mathrm{inner}} N\right)
    = \mathcal{O}\!\left(L D_{\mathrm{inner}} N \sum_{m=1}^{M} \frac{1}{s_m}\right),
\end{equation}
which is linear in the original sequence length $L$.
\end{theorem}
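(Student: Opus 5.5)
The plan is to count the work of a single selective scan, sum over the two directions and the $M$ scales, and then bound each $L_m$ by $L/s_m+1$.

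\textbf{Cost of one scan.} Fix one scale $m$ and one direction. The scan processes the token sequence $\mathbf{H}^{(m)}$ of length $L_m$ after projection to width $D_{\mathrm{inner}}=\alpha D$.

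At each token the selective SSM does three things. It forms the input-dependent parameters $\Delta_t$, $\mathbf{B}_t$, $\mathbf{C}_t$. It discretizes, giving $\bar{\mathbf{A}}_t=\exp(\Delta_t\mathbf{A})$ with $\mathbf{A}$ diagonal per channel. It updates a state $\mathbf{h}_t\in\mathbb{R}^{D_{\mathrm{inner}}\times N}$ via $\mathbf{h}_t=\bar{\mathbf{A}}_t\odot\mathbf{h}_{t-1}+\bar{\mathbf{B}}_t x_t$ and reads out $y_t=\mathbf{C}_t\mathbf{h}_t$.

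Because $\mathbf{A}$ is diagonal, each of these operations costs $\mathcal{O}(D_{\mathrm{inner}}N)$ per token. The recurrence can be evaluated sequentially or with a parallel associative scan; the total work is $\mathcal{O}(L_m D_{\mathrm{inner}} N)$ either way.

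The backward direction is the same scan run on the reversed sequence, so it costs the same. Bidirectionality therefore contributes only a factor of $2$.

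\textbf{Summing and bounding.} Summing over the $M$ scales gives $\mathcal{O}(\sum_m L_m D_{\mathrm{inner}} N)$, which is the first claim. For the second claim, $\lfloor (L-s_m)/s_m\rfloor \le L/s_m - 1$, so $L_m \le L/s_m$, and in particular $L_m\le L/s_m+1$. Hence
\[
\sum_m L_m D_{\mathrm{inner}} N \le L D_{\mathrm{inner}} N\sum_m \frac{1}{s_m} + M D_{\mathrm{inner}} N .
\]
The additive $M D_{\mathrm{inner}} N$ term is absorbed because every stride satisfies $s_m\le L$ (otherwise that scale produces no tokens). Then $L/s_m\ge 1$, so $M\le L\sum_m 1/s_m$.

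Conversely, $L_m\ge L/s_m-1$, and the reverse direction is not needed for an upper bound. The two expressions are therefore equal up to constants whenever $L\ge 2s_M$. Since $M$, $N$ and the strides do not depend on $L$, the final bound is linear in $L$.

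\textbf{Main obstacle.} The hardest step is stating the cost model carefully enough to justify the per-token $\mathcal{O}(D_{\mathrm{inner}}N)$ claim, since the theorem does not list which operations count as the "scan."

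The projections that produce $\Delta,\mathbf{B},\mathbf{C}$ cost $\mathcal{O}(D_{\mathrm{inner}}(N+R))$ per token, where $R$ is the rank of the $\Delta$ projection. The depthwise convolution costs $\mathcal{O}(D_{\mathrm{inner}}k)$ per token, where $k$ is its kernel width. I would restrict the statement to the scan itself, as the theorem phrases it, and remark that the surrounding linear and convolutional layers add $\mathcal{O}(L_m D D_{\mathrm{inner}})$ per scale. That extra cost is also linear in $L$, so it does not change the conclusion.
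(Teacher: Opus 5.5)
Your proposal is correct and follows the same route as the paper's proof: a per-scan cost of $\mathcal{O}(L_m D_{\mathrm{inner}} N)$, a constant factor of two for the backward direction, summation over the $M$ scales, and substitution of a bound on $L_m$. You add care the paper's sketch lacks, namely stating that the diagonal (S4D) structure of $\mathbf{A}$ is what gives the per-token $\mathcal{O}(D_{\mathrm{inner}}N)$ cost and handling the additive $M D_{\mathrm{inner}} N$ term that the paper silently absorbs; note that your sharper bound $L_m \le L/s_m$ already removes that additive term, so the absorption step via $s_m \le L$ is unnecessary.
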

\begin{proof}
For a branch of length $L_m$, one selective scan has complexity $\mathcal{O}(L_m D_{\mathrm{inner}} N)$. Bidirectional scanning introduces only a constant factor of two, which is absorbed into big-$\mathcal{O}$ notation. Summing over branches gives the first expression. Substituting the bound $L_m \leq L/s_m + 1$ yields the second expression.
\end{proof}

This theorem explains the central efficiency advantage of \MedMamba{}: the model does not pay quadratic cost for global temporal interaction, and the multi-scale stems further shorten the sequences processed by the state-space backbone.

% ================================================================
\subsection{Principle III: Bidirectional State-Space Inference}
\label{sec:bimamba}

\subsubsection{State Space Model Preliminaries}
State space models (SSMs) map a one-dimensional input signal $u(t) \in \mathbb{R}$ to an output $y(t) \in \mathbb{R}$ through a latent state $\mathbf{h}(t) \in \mathbb{R}^{N}$ via the continuous-time system
\begin{equation}
    \mathbf{h}'(t) = \mathbf{A}\mathbf{h}(t) + \mathbf{B}u(t), \qquad
    y(t) = \mathbf{C}\mathbf{h}(t) + Du(t),
    \label{eq:ssm_cont}
\end{equation}
where $\mathbf{A} \in \mathbb{R}^{N \times N}$ is the state transition matrix, $\mathbf{B} \in \mathbb{R}^{N \times 1}$ and $\mathbf{C} \in \mathbb{R}^{1 \times N}$ are input/output projections, and $D \in \mathbb{R}$ is a feed-through scalar.

For discrete processing, Eq.~\eqref{eq:ssm_cont} is discretized using zero-order hold with step size $\Delta$:
\begin{equation}
    \bar{\mathbf{A}} = \exp(\Delta \mathbf{A}), \qquad
    \bar{\mathbf{B}} = (\Delta\mathbf{A})^{-1}(\exp(\Delta\mathbf{A}) - \mathbf{I})\,\Delta\mathbf{B},
\end{equation}
which yields the recurrence
\begin{equation}
    \mathbf{h}_t = \bar{\mathbf{A}}\mathbf{h}_{t-1} + \bar{\mathbf{B}}u_t, \qquad
    y_t = \mathbf{C}\mathbf{h}_t + Du_t.
    \label{eq:ssm_disc}
\end{equation}

This formulation is especially attractive for medical time series because it decouples \emph{state dimension} from \emph{sequence length}: long temporal context can be accumulated in the latent state without explicitly materializing pairwise interactions between all time points.

\subsubsection{Selective Scan Mechanism}
Following Mamba~\cite{gu2024mamba}, we make the SSM input-dependent so that the memory dynamics adapt to the signal content. For each position $t$ and feature dimension $d$, the step size and input/output projections are generated from the current feature vector $\mathbf{x}_t$:
\begin{align}
    \Delta_{t,d}
    &= \mathrm{softplus}\!\left(\mathbf{w}_{\Delta,d}^{\top}\mathbf{x}_t + b_{\Delta,d}\right),
    \label{eq:delta} \\
    \mathbf{B}_t &= \mathbf{W}_B\mathbf{x}_t,
    \label{eq:B} \\
    \mathbf{C}_t &= \mathbf{W}_C\mathbf{x}_t.
    \label{eq:C}
\end{align}
The transition matrix $\mathbf{A}$ is shared across time and parameterized in logarithmic space using S4D initialization~\cite{gu2021efficiently}. The resulting selective scan can be interpreted as an adaptive memory mechanism: salient waveform segments may trigger short update scales and stronger input injection, while smoother regions may favor slower latent evolution.

\subsubsection{Bidirectional Scanning}
To operationalize Principle III, \MedMamba{} applies selective scan in both forward and backward directions. Given the input $\mathbf{x}_{\mathrm{ssm}} \in \mathbb{R}^{L' \times D_{\mathrm{inner}}}$, we compute
\begin{equation}
\begin{aligned}
\mathbf{y}^{(k)}
= \mathrm{SelectiveScan}\bigl(
&\mathbf{x}_{\mathrm{ssm}}^{(k)},\,
\Delta^{(k)},\,
\mathbf{A}, \\
&\mathbf{B}^{(k)},\,
\mathbf{C}^{(k)},\,
D
\bigr), 
\end{aligned}
\end{equation}
where $k \in \{\mathrm{fwd}, \mathrm{bwd}\}$. The two directional states are then merged into a context-enriched representation $\mathbf{y}$.

\begin{proposition}[Necessity of bidirectional context for symmetric dependencies]
Suppose that the discriminative statistic at position $t$ depends on a symmetric neighborhood $\mathbf{x}_{t-r:t+r}$ for some $r>0$. Then any strictly causal representation of the form $\mathbf{z}_t = F(\mathbf{x}_{\leq t})$ cannot recover that statistic at time $t$ without an explicit delay of at least $r$ steps, whereas a bidirectional representation of the form $\mathbf{z}_t = G(\mathbf{x}_{\leq t}, \mathbf{x}_{\geq t})$ can access the full neighborhood at the same index.
\end{proposition}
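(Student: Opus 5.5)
The plan is to make the informal statement precise and then argue separately for the negative part (causal maps fail) and the positive part (bidirectional maps succeed). To make ``depends on a symmetric neighborhood'' precise, I will say that the discriminative statistic is a map $\psi(\mathbf{x}_{t-r:t+r})$ that \emph{genuinely depends on the future block} $\mathbf{x}_{t+1:t+r}$. Concretely, there exist two sequences $\mathbf{x}$ and $\mathbf{x}'$ that agree on all indices $\leq t$ but satisfy $\psi(\mathbf{x}_{t-r:t+r})\neq\psi(\mathbf{x}'_{t-r:t+r})$. This non-degeneracy assumption is needed, since otherwise the statistic would trivially be causal. I will also make ``recover'' precise: there should be a readout $\rho$ with $\rho(\mathbf{z}_t)=\psi(\mathbf{x}_{t-r:t+r})$ for all inputs.

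For the impossibility claim, I will use an indistinguishability argument. If $\mathbf{z}_t=F(\mathbf{x}_{\leq t})$, then $\mathbf{x}$ and $\mathbf{x}'$ produce the same $\mathbf{z}_t$, and hence the same readout $\rho(\mathbf{z}_t)$. Yet the targets differ, which is a contradiction.

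To obtain the ``delay of at least $r$'' refinement, I will consider causal representations that emit their estimate at index $t+\delta$, namely $\mathbf{z}_{t+\delta}=F(\mathbf{x}_{\leq t+\delta})$. The same argument applies whenever $\delta<r$, provided the non-degeneracy is strengthened: the statistic must depend on $\mathbf{x}_{t+r}$ specifically, i.e.\ there are pairs agreeing up to $t+r-1$ but differing in $\psi$. Conversely, when $\delta=r$ the map $F(\mathbf{x}_{\leq t+r})=\psi(\mathbf{x}_{t-r:t+r})$ is admissible. This shows that $r$ is exactly the minimal delay.

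For the bidirectional part, the construction is explicit. The pair $(\mathbf{x}_{\leq t},\mathbf{x}_{\geq t})$ contains $\mathbf{x}_{t-r:t+r}$, so setting $G=\psi\circ\pi$, with $\pi$ the restriction to the window, recovers the statistic at index $t$ with no delay. I will then add a remark tying this to the architecture: the forward selective scan supplies a summary of $\mathbf{x}_{\leq t}$, and the backward scan supplies a summary of $\mathbf{x}_{\geq t}$. The merged output therefore lies in the function class $G$. The claim is about \emph{access}, not about whether finite-state SSMs represent $\psi$ exactly, so I will state it as such.

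The main obstacle is the formalization rather than any computation. The proposition as written is informal, and it is false without some non-degeneracy of the dependence on future samples. Choosing the right hypothesis (dependence on $\mathbf{x}_{t+r}$ for the sharp delay bound) is the delicate step, and the rest follows in a few lines.
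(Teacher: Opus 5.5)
Your proposal is correct and follows the same route as the paper. The paper's proof is the informal version of your indistinguishability argument: a causal encoder never sees $\mathbf{x}_{t+1},\ldots,\mathbf{x}_{t+r}$, so a statistic that ``depends essentially on the full symmetric window'' cannot be computed at $t$ without a shift of $r$, while a bidirectional encoder has the whole window available. Your explicit non-degeneracy hypotheses make precise what the paper's phrase ``depends essentially'' leaves implicit. In particular, you require genuine dependence on $\mathbf{x}_{t+r}$ for the sharp delay bound, and you show that $\delta=r$ is attainable.
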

\begin{proof}
A strictly causal encoder has no access to the future samples $\mathbf{x}_{t+1},\ldots,\mathbf{x}_{t+r}$ when constructing $\mathbf{z}_t$. Therefore, any statistic that depends essentially on the full symmetric window cannot be computed at position $t$ unless the representation is shifted by at least $r$ steps. A bidirectional encoder maintains both past and future-conditioned states, making the full neighborhood available when forming the representation at position $t$.
\end{proof}

The proposition is particularly relevant for medical classification, where the model is evaluated on complete segments rather than in a strictly streaming regime. In this setting, bidirectional inference supplies a better inductive bias than a causal-only encoder.

\begin{remark}
Under this interpretation, the bidirectional \MedMamba{} block behaves closer to a forward-backward smoothing operator than to a purely causal filter. This distinction matters because offline diagnosis is usually based on fully observed windows, not on online forecasting constraints.
\end{remark}

\subsubsection{Block Architecture}
\label{sec:bimamba_block}
Each bidirectional Mamba block processes a scale-specific sequence $\mathbf{H} \in \mathbb{R}^{L' \times D}$ through five steps.

\textbf{(1) Input projection and gating.}
The normalized input is projected into a state-space branch and a gating branch:
\begin{equation}
    [\mathbf{x}_{\mathrm{ssm}},\mathbf{g}] = \mathrm{Linear}(\mathrm{LN}(\mathbf{H})) \in \mathbb{R}^{L' \times 2D_{\mathrm{inner}}},
\end{equation}
where $D_{\mathrm{inner}}=\alpha D$.

\textbf{(2) Local refinement.}
A depthwise 1-D convolution injects short-range temporal structure:
\begin{equation}
    \mathbf{x}_{\mathrm{ssm}} \leftarrow \mathrm{SiLU}\!\left(\mathrm{DWConv1D}(\mathbf{x}_{\mathrm{ssm}})\right).
\end{equation}
This step complements the long-range state dynamics by providing a local morphological prior.

\textbf{(3) Bidirectional selective scan.}
The refined sequence is processed in both directions to produce $\mathbf{y}_{\mathrm{bi}} \in \mathbb{R}^{L' \times D_{\mathrm{inner}}}$.

\textbf{(4) Gated output projection.}
The scan output is modulated by the learned gate and projected back to the model dimension:
\begin{equation}
    \mathrm{out} = \mathrm{Linear}\!\left(\mathrm{LN}(\mathbf{y}_{\mathrm{bi}}) \odot \mathrm{SiLU}(\mathbf{g})\right) \in \mathbb{R}^{L' \times D}.
\end{equation}
The gate makes the residual update content-adaptive: the model may choose to pass through or suppress state information depending on the local evidence.

\textbf{(5) Residual update.}
Finally,
\begin{equation}
    \mathbf{H} \leftarrow \mathbf{H} + \mathrm{DropPath}(\mathrm{out}).
\end{equation}
The residual form stabilizes optimization and preserves a direct information path across deep stacks.

% ================================================================
\subsection{Gated Feed-Forward Refinement}
\label{sec:gffn}

Each bidirectional Mamba block is followed by a gated feed-forward network using the SwiGLU form:
\begin{equation}
    \mathrm{GatedFFN}(\mathbf{H}) = \mathbf{W}_{\mathrm{down}}\!\left(\mathrm{SiLU}(\mathbf{W}_{\mathrm{gate}}\mathbf{H}) \odot \mathbf{W}_{\mathrm{up}}\mathbf{H}\right),
\end{equation}
where $\mathbf{W}_{\mathrm{gate}}, \mathbf{W}_{\mathrm{up}} \in \mathbb{R}^{D_{\mathrm{ffn}} \times D}$ and $\mathbf{W}_{\mathrm{down}} \in \mathbb{R}^{D \times D_{\mathrm{ffn}}}$ with $D_{\mathrm{ffn}} = \beta D$. The block is wrapped with a pre-normalized residual connection:
\begin{equation}
    \mathbf{H} \leftarrow \mathbf{H} + \mathrm{DropPath}\!\left(\mathrm{GatedFFN}(\mathrm{LN}(\mathbf{H}))\right).
\end{equation}
This module provides a complementary source of nonlinearity after state propagation. In other words, the SSM is responsible for structured temporal transport, while the GatedFFN increases feature selectivity and channel-wise interaction in the hidden space.

% ================================================================
\subsection{Cross-Scale Evidence Aggregation}
\label{sec:pooling}

After $N_{\mathrm{layer}}$ stacked blocks, each scale branch produces contextualized tokens $\mathbf{H}^{(m)} \in \mathbb{R}^{L_m \times D}$. We summarize each branch by attention pooling:
\begin{equation}
    \alpha_t^{(m)} = \frac{\exp\!\left(\mathbf{w}_2^{\top}\tanh(\mathbf{W}_1\mathbf{h}_t^{(m)})\right)}{\sum_{t'=1}^{L_m} \exp\!\left(\mathbf{w}_2^{\top}\tanh(\mathbf{W}_1\mathbf{h}_{t'}^{(m)})\right)},
\end{equation}
\begin{equation}
    \mathbf{r}^{(m)} = \sum_{t=1}^{L_m} \alpha_t^{(m)}\mathbf{h}_t^{(m)} \in \mathbb{R}^{D},
\end{equation}
where $\mathbf{W}_1 \in \mathbb{R}^{(D/4) \times D}$ and $\mathbf{w}_2 \in \mathbb{R}^{D/4}$ are learnable parameters.

The scale descriptors are then fused:
\begin{equation}
\begin{aligned}
\mathbf{z}
= \mathrm{GELU}\Bigl(
\mathrm{Linear}\bigl(
\mathrm{LN}(&[\mathbf{r}^{(1)} \,\|\, \mathbf{r}^{(2)} \,\|\, \cdots \,\|\, \mathbf{r}^{(M)}])
\bigr)
\Bigr) \\
&\in \mathbb{R}^{D}.
\end{aligned}
\end{equation}
followed by the classifier
\begin{equation}
    \hat{\mathbf{y}} = \mathbf{W}_{\mathrm{cls}}\,\mathrm{LN}(\mathbf{z}) + \mathbf{b}_{\mathrm{cls}} \in \mathbb{R}^{K}.
\end{equation}

This final stage performs \emph{evidence aggregation rather than late averaging}. Each branch contributes a scale-specific sufficient statistic, and the fusion layer learns how to weigh transient and slow dynamics jointly for the target task.

\subsection{Implementation Details}
\label{sec:implementation}

\textbf{Model configuration.}
All experiments are conducted using carefully designed model configurations for each dataset, as summarized in Table~\ref{tab:model_config_all}. 
While most architectural hyperparameters are shared, such as the hidden dimension ($D = 128$), SSM expansion ($E = 2$), and FFN expansion ($E_{\text{ffn}} = 4$), the model depth ($N_{\text{layer}}$) is adjusted according to the characteristics of each dataset.

The multi-scale CNN patch embedding adopts three parallel branches with strides $(5, 10, 25)$ to capture temporal patterns at different resolutions. 
Regularization techniques, including feature dropout, stochastic depth, and channel dropout, are applied  across datasets to enhance robustness.

\textbf{Training protocol.}
We train models for a maximum of 50 epochs with a batch size of 512 using the AdamW optimizer~\cite{loshchilov2017decoupled} with an initial learning rate of $5 \times 10^{-4}$ and a weight decay of $0.1$ on datasets except APAVA. As for APAVA, we adopt a batch size of 128, since small datasets require more training steps.
The learning rate schedule consists of a linear warmup over the first 5 epochs (starting from $1\%$ of the peak learning rate) followed by cosine annealing~\cite{loshchilov2016sgdr} for the remaining epochs.
Gradient norms are clipped to a maximum of $4.0$ to stabilize training.
We choose the checkpoint with the best validation F1 for final evaluation on the test set.

\textbf{Loss function.} 
We adopt the standard cross-entropy loss as the training objective. 
To improve generalization, label smoothing with $\epsilon = 0.02$ is applied to the softmax predictions.

\textbf{Evaluation metrics.}
We report six evaluation metrics: accuracy, macro-averaged precision, macro-averaged recall, macro-averaged F1 score, and macro-averaged AUROC.
All experiments are conducted with five random seeds (41--45) in fixed training, validation, and test splits, and we report the mean and standard deviation in the runs.

\textbf{Hardware.}
All experiments are conducted on two NVIDIA RTX 3090 GPUs. Multi-GPU training is supported via PyTorch DistributedDataParallel (DDP) for the larger datasets.
\section{Experiments}
\label{sec:experiment}

We evaluated \MedMamba{} on six publicly available medical time-series benchmarks that span three distinct physiological modalities: electroencephalography (EEG), electrocardiography (ECG) and human activity. All experiments are conducted in a \emph{subject-independent} setting, where the subjects in the training, validation, and test sets are mutually exclusive, to faithfully simulate real-world clinical deployment in which the model must generalize to previously unseen patients.
\begin{table}[t]
\centering
\caption{Model configuration across different datasets.}
\label{tab:model_config_all}
\renewcommand{\arraystretch}{1.1}
\setlength{\tabcolsep}{5pt}
\resizebox{\linewidth}{!}{
\begin{tabular}{lcccccc}
\toprule
\textbf{Hyperparameter} 
& \textbf{PTB} 
& \textbf{PTB-XL} 
& \textbf{ADFTD} 
& \textbf{APAVA} 
& \textbf{UCI-HAR} 
& \textbf{SleepEDF} \\
\midrule
Hidden dim $D$              & 128 & 128 & 128 & 128 & 128 & 128 \\
Layers $N_{\text{layer}}$   & 3   & 3   & 4   & 4   & 4   & 4 \\
SSM expansion $E$           & 2   & 2   & 2   & 2   & 2   & 2 \\
FFN expansion $E_{\text{ffn}}$ & 4 & 4   & 4   & 4   & 4   & 4 \\
CNN strides                & (5,10,25) & (5,10,25) & (5,10,25) & (5,10,25) & (5,10,25) & (5,10,25) \\
Feature dropout $p_{\text{do}}$ & 0.1 & 0.1 & 0.1 & 0.1 & 0.1 & 0.1 \\
Stochastic depth $p_{\text{dp}}$ & 0.1 & 0.1 & 0.1 & 0.1 & 0.1 & 0.1 \\
Channel dropout $p_{\text{ch}}$ & 0.3 & 0.3 & 0.1 & 0.1 & 0.1 & 0.1 \\
\bottomrule
\end{tabular}
}
\end{table}
\subsection{Datasets}
\label{sec:datasets}

Table~\ref{tab:dataset_summary} summarizes the key statistics of the six datasets used in our experiments.

\begin{table}[htbp]
\centering
\caption{Summary of datasets used in this study.}
\label{tab:dataset_summary}
\renewcommand{\arraystretch}{1.1}
\setlength{\tabcolsep}{6pt}

\resizebox{0.49\textwidth}{!}{%
\begin{tabular}{lcccccc}
\toprule
\textbf{Dataset} & \textbf{\#Subject} & \textbf{\#Sample} & \textbf{\#Class} 
& \textbf{\#Channel} & \textbf{\#Timestamps} & \textbf{Modality} \\
\midrule
ADFTD     & 88    & 69,752   & 3  & 19 & 256    & EEG \\
PTB       & 198   & 64,356   & 2  & 15 & 300   & ECG \\
PTB-XL    & 17,596& 191,400  & 5  & 12 & 250   & ECG \\
APAVA     & 23    & 5,967    & 2  & 16 & 256    & EEG \\
UCI-HAR   & 30    & 10,299   & 6  & 9  & 128   & IMU (Acc+Gyro) \\
Sleep-EDF & 100   & 42,308   & 5  & 1  & 3000  & EEG \\
\bottomrule
\end{tabular}
}

\end{table}

\textbf{APAVA}~\cite{escudero2006analysis} is a public EEG dataset collected from 23 subjects (12 patients with Alzheimer's disease and 11 healthy controls) with 16 channels at a sampling rate of 256 \,Hz.
Each 5-second trial (1,280 timestamps) is first scaled with a standard scaler and then segmented into half-overlapping 1-second samples of 256 timestamps, yielding 5,967 samples in total.
For the subject-independent split, subjects with IDs \{15, 16, 19, 20\} and \{1, 2, 17, 18\} are assigned to the validation and test sets, respectively, and the remaining subjects constitute the training set.

\textbf{ADFTD}~\cite{miltiadous2023dice} is a public EEG dataset comprising recordings from 88 subjects across three diagnostic categories: 29 healthy controls, 23 frontotemporal dementia (FTD) patients, and 36 Alzheimer's disease (AD) patients.
The 19-channel signals are originally sampled at 500\,Hz with a 0.5--45\,Hz bandpass filter applied.
We downsample to 256\,Hz and segment each recording into non-overlapping 1-second samples of 256 timestamps, resulting in 69,752 samples.
For the subject-independent split, 60\%, 20\%, and 20\% of the subjects (stratified by class) are assigned to training, validation and test sets, respectively.

\textbf{PTB}~\cite{bank2000physiotoolkit} is a public ECG dataset recorded from 290 subjects with 15 channels at a raw sampling rate of 1,000\,Hz.
We use a subset of 198 subjects comprising myocardial infarction patients and healthy controls.
The signals are downsampled to 250\,Hz, normalized via standard scaling, and processed into single-heartbeat samples by detecting R-peaks across all channels.
Shorter heartbeats are zero-padded to a uniform length of 300 timestamps, producing 64,356 samples.
Subjects are divided into training, validation and test sets in a ratio of 55\%, 15\% and 30\% in a subject-independent manner.

\textbf{PTB-XL}~\cite{wagner2020ptb} is a large-scale public ECG dataset containing 12-lead recordings from 18,869 subjects, each annotated with one of five diagnostic categories: Normal, Myocardial Infarction (MI), ST/T Change (STTC), Conduction Disturbance (CD), and Hypertrophy.
After discarding subjects with inconsistent labels across trials, 17,596 subjects remain.
The 500\,Hz recordings are downsampled to 250\,Hz, normalized, and segmented into non-overlapping 1-second samples of 250 timestamps, yielding 191,400 samples.
For the subject-independent split, 60\%, 20\%, and 20\% of the subjects (stratified by class) are allocated to the training, validation, and test sets, respectively.

\textbf{Sleep-EDF}~\cite{bank2000physiotoolkit} is a large-scale public sleep-stage classification datasets. The objective of sleep stage classification is to categorize 30-second EEG signals into five distinct stages: Wake (W), Non-rapid eye movement (N1,N2,N3), and Rapid Eye Movement (REM). We make use of the Fpz-Cz channel, which captures EEG signals sampled at a rate of 100 Hz.  

\textbf{UCI-HAR}~\cite{anguita2013public} is a smartphone-based human activity recognition dataset collected from 30 subjects performing six daily activities. It contains nine-channel IMU signals (3-axis accelerometer, gyroscope, and body acceleration) sampled at 128Hz. The signals are segmented into 2.56-second windows (128 timestamps) with 50\% overlap, yielding 10,299 samples in total. We adopt the standard subject-independent split with 70\% of subjects for training and 30\% for testing.

All samples undergo per-channel z-score normalization before being fed into the model. Across all six datasets, samples from the same subject appear exclusively in one of the three splits, ensuring that no information leakage occurs between training and evaluation.

\subsection{Baselines}
To showcase the superiority of our methods, we extensively include the latest and advanced models in the community as basic baselines. These include general time series transformers: Autoformer~\cite{wu2021autoformer}, Crossformer~\cite{zhang2023crossformer}, FEDformer~\cite{zhou2022fedformer}, Informer~\cite{zhou2021informer}, MTST~\cite{zhang2024multi}, iTransformer~\cite{liu2023itransformer}, Nonformer~\cite{liu2022non}, PatchTST~\cite{nie2022time}, Reformer~\cite{kitaev2020reformer}, Transformer~\cite{vaswani2017attention}; specialized MedTS transformers: Medformer~\cite{wang2024medformer}.

\subsection{Centralization Analysis of Medical Time Series}

To better understand the structural characteristics of medical time series,
we quantify the degree of centralization using two complementary metrics:
spectral centralization index (SCI) and dynamic influence centralization (DIC).

\textbf{Spectral Centralization Index (SCI).}
Given a multivariate time series $\mathbf{X} \in \mathbb{R}^{S \times T}$,
where $S$ is the number of channels and $T$ is the sequence length,
we first remove the temporal mean:
\begin{equation}
\bar{\mathbf{X}} = \frac{1}{T}\mathbf{X}\mathbf{1}_T,
\end{equation}
and compute the covariance matrix:
\begin{equation}
\mathbf{\Sigma} = \frac{1}{T-1}(\mathbf{X}-\bar{\mathbf{X}})(\mathbf{X}-\bar{\mathbf{X}})^\top.
\end{equation}
The SCI is defined as:
\begin{equation}
\mathrm{SCI}(\mathbf{X}) = \frac{\lambda_{\max}(\mathbf{\Sigma})}{\mathrm{Tr}(\mathbf{\Sigma})},
\end{equation}
where $\lambda_{\max}(\cdot)$ denotes the largest eigenvalue.
A higher SCI indicates that the signal energy is concentrated in a
low-dimensional dominant subspace, reflecting strong spatial centralization.

\textbf{Dynamic Influence Centralization (DIC).}
To measure temporal dominance, we model the time series using a first-order
vector autoregressive (VAR) formulation. Let
\begin{equation}
\mathbf{Z} = [\mathbf{x}_1, \mathbf{x}_2, \dots, \mathbf{x}_{T-1}], \quad
\mathbf{Y} = [\mathbf{x}_2, \mathbf{x}_3, \dots, \mathbf{x}_T],
\end{equation}
where $\mathbf{x}_t$ is the $t$-th column of $\mathbf{X}$.
We estimate the transition matrix:
\begin{equation}
\mathbf{A} = \mathbf{Y}\mathbf{Z}^\top.
\end{equation}
The outgoing influence strength of channel $i$ is defined as:
\begin{equation}
s_i = \sum_{j} |A_{ji}|,
\end{equation}
and the DIC is computed as:
\begin{equation}
\mathrm{DIC}(\mathbf{X}) = \frac{\max_i s_i - \bar{s}}{\bar{s}}, \quad
\bar{s} = \frac{1}{S} \sum_{i} s_i.
\end{equation}
A higher DIC indicates that a small subset of channels dominates
the temporal dynamics of the system.

SCI measures spatial dominance as the concentration of signal energy
in the principal component of the covariance matrix, while DIC captures
temporal dominance as the imbalance of dynamic influence across channels.

Our empirical analysis as in Table \ref{tab:centralization} shows that physiological signals such as EEG and ECG
exhibit significantly higher centralization compared to general time series. This indicates that a small subset of channels
or physiological processes governs the global dynamics.

This observation has important implications for model design.
Most existing architectures, particularly Transformer-based methods,
implicitly assume uniform interactions across channels and time steps,
treating all tokens equally. However, such a fully distributed modeling
paradigm is mismatched with the inherently centralized structure of
medical time series.

In contrast, the design of \MedMamba{} aligns with this property.
The channel mixing module first aggregates cross-channel information,
allowing dominant channels to influence the global representation.
The bidirectional Mamba blocks further propagate these dominant dynamics
through selective state updates, while the multi-scale CNN patch embedding
captures centralization patterns across different temporal resolutions.
Together, these components form a centralized aggregating–redistributing
mechanism, which naturally fits the intrinsic structure of medical time
series and contributes to the observed performance gains.
\begin{table}[t]
\centering
\caption{Centralization metrics (SCI and DIC) comparing medical time series and climate data. 
Higher values indicate stronger centralized behavior.}
\label{tab:centralization}
\renewcommand{\arraystretch}{1.05}
\setlength{\tabcolsep}{4pt}

\begin{tabular}{lcccc|c}
\toprule
\textbf{Metric} 
& \multicolumn{4}{c|}{\textbf{Medical Time Series}} 
& \textbf{Climate} \\
\cmidrule(lr){2-5}
& ADFTD & APAVA & PTB & PTB-XL 
& Weather \\
\midrule
SCI & 0.92 & 0.52 & 0.62 & 0.65 & 0.38 \\
DIC & 0.67 & 0.73 & 0.83 & 0.78 & 0.34 \\
\bottomrule
\end{tabular}
\end{table}
\vspace{-10pt}
\subsection{Main Results}
\input{tables/main_results1}
\input{tables/main_results2}
\textbf{ECG Benchmarks.} 
Table~\ref{tab:jbhi_table2} reports the quantitative comparison on two widely-used ECG classification benchmarks, PTB (2-class) and PTB-XL (5-class). 
Our method consistently achieves the best performance across all evaluation metrics, demonstrating its strong capability in modeling medical time series.

\textbf{PTB (Binary Classification).} 
On the binary classification task, our model establishes a clear performance margin over all baselines. 
It achieves an accuracy of \textbf{85.97\%}, outperforming the strongest baseline, Medformer (83.50\%), by \textbf{+2.47\%}. 
Consistent improvements are also observed in precision, recall, F1-score, and AUROC, indicating superior discriminative ability and better class balance.

\textbf{PTB-XL (Multi-class Classification).} 
On the more challenging multi-class setting, our method continues to deliver consistent performance gains. 
It improves the accuracy from 73.30\% (Crossformer) to \textbf{74.39\%}, and achieves the highest AUROC of \textbf{91.08\%}. 
The improvements are stable across all evaluation metrics, suggesting strong generalization under increased task complexity.

\textbf{EEG and Human Activity Benchmarks.} 
To further evaluate the generalizability of our model across different modalities, we conduct experiments on EEG (ADFTD, APAVA,SleepEDF) and human activity (UCI-HAR) benchmarks, as summarized in Table~\ref{tab:jbhi_table1}. 
Our method consistently achieves the best or highly competitive performance across all datasets and metrics.

\textbf{EEG Classification (ADFTD, APAVA and SleepEDF).} 
On ADFTD, a challenging EEG classification benchmark, our model achieves the highest accuracy of \textbf{54.72\%} and the highest F1-Score performance(52.01\%), outperforming Medformer (53.27\% and 50.65\%) and other baselines.
This dataset is known for its high inter-subject variability and subtle pathological patterns, making it particularly difficult for existing methods.
Despite these challenges, our method consistently improves performance across precision, recall, F1-score and AUROC, demonstrating its effectiveness in capturing complex EEG dynamics.

On APAVA, the performance margin is more pronounced. 
Our method achieves \textbf{82.43\%} accuracy, significantly surpassing the strongest baseline (78.74\%) by nearly \textbf{+3.7\%}. 
Notably, consistent improvements across all evaluation metrics indicate enhanced robustness in handling noisy and subject-variant EEG signals.

In SleepEDF, where each input corresponds to a 30-second epoch that requires modeling long-range temporal dependencies, our method achieves the highest AUROC (\textbf{96.05\%}) and competitive performance in other metrics.
These results further validate the ability of our model to capture long-term temporal dynamics for sleep staging tasks.

\textbf{Human Activity Recognition (UCI-HAR).} 
On UCI-HAR, our model achieves the best performance across all metrics, reaching an accuracy of \textbf{96.16\%} and AUROC of  \textbf{99.83\%}.
This dataset contains structured multi-channel IMU signals with strong inter-channel correlations and well-defined temporal patterns.
The superior performance demonstrates the effectiveness of our model in capturing both cross-channel interactions and temporal dependencies in human activity data.

\textbf{Overall Analysis.} 
Across all datasets, covering ECG, EEG, and human activity signals, our method consistently outperforms a diverse set of baselines. 
This cross-domain superiority highlights its strong generalization capability and robustness.

We attribute these improvements to the ability of our model to effectively capture both short-term local patterns and long-range temporal dependencies, while maintaining stable representations across different signal modalities and task complexities.
% ================================================================

\subsection{Ablation Study}
We benchmark the effects of each component on the most challenging dataset ADFTD, to provide a comprehensive analysis of their individual contributions. Given the high inter-subject variability and subtle EEG patterns in ADFTD, this dataset serves as a rigorous testbed for evaluating the effectiveness and robustness of our design. The results are shown in Table~\ref{tab:ablation_vertical}.

\textbf{Effect of Multi-Scale Patch Embedding.}
Replacing the multi-scale design (strides $\{5, 10, 25\}$) with a single-scale of 5 reduces the F1-score from 52.01\% to 50.25\% ($-1.76\%$) and the AUROC from 73.53\% to 71.63\% ($-1.90\%$). This confirms that capturing temporal patterns at multiple resolutions is essential: the fine-grained scale (stride=5) captures fast transient features in the beta band, while the coarser scales (stride=10, 25) capture slower alpha- and theta-band dynamics that are characteristic of dementia-related EEG changes.

\textbf{Effect of Bidirectional Scanning.}
Replacing bidirectional scanning with a unidirectional (forward-only) variant leads to a consistent performance drop across all metrics. Specifically, the F1-score decreases from 52.01\% to 51.41\% ($-0.60\%$), while the AUROC drops from 73.53\% to 72.85\% ($-0.68\%$). Although the degradation is moderate compared to multi-scale modeling, the improvement is consistent across all evaluation metrics. This suggests that backward context provides complementary temporal cues that are not captured by forward dynamics alone, which is particularly beneficial for modeling symmetric or bidirectional temporal patterns commonly observed in physiological signals such as EEG oscillations.

\textbf{Effect of Channel Mixing.}
Removing the channel mixing module results in a more pronounced degradation, with the F1-score decreasing from 52.01\% to 50.98\% ($-1.03\%$) and the AUROC from 73.53\% to 72.21\% ($-1.32\%$). This indicates that explicitly modeling inter-channel dependencies plays a crucial role in extracting discriminative features from multi-channel physiological signals. Unlike temporal modeling components, which operate along the time axis, channel mixing captures cross-channel interactions that reflect underlying physiological coupling (e.g., between EEG electrodes), thereby enhancing representation quality.

\textbf{Summary.}
All three components contribute positively to the overall performance, with multi-scale modeling providing the largest gain, followed by channel mixing and bidirectional scanning. The full model consistently achieves the best results across all metrics, demonstrating that temporal multi-resolution modeling, bidirectional context integration, and cross-channel interaction are complementary and jointly necessary for effective representation learning in medical time series.

\input{tables/ablation}

\begin{figure}[!htbp]
    \centering
    \includegraphics[width=1.0\linewidth]{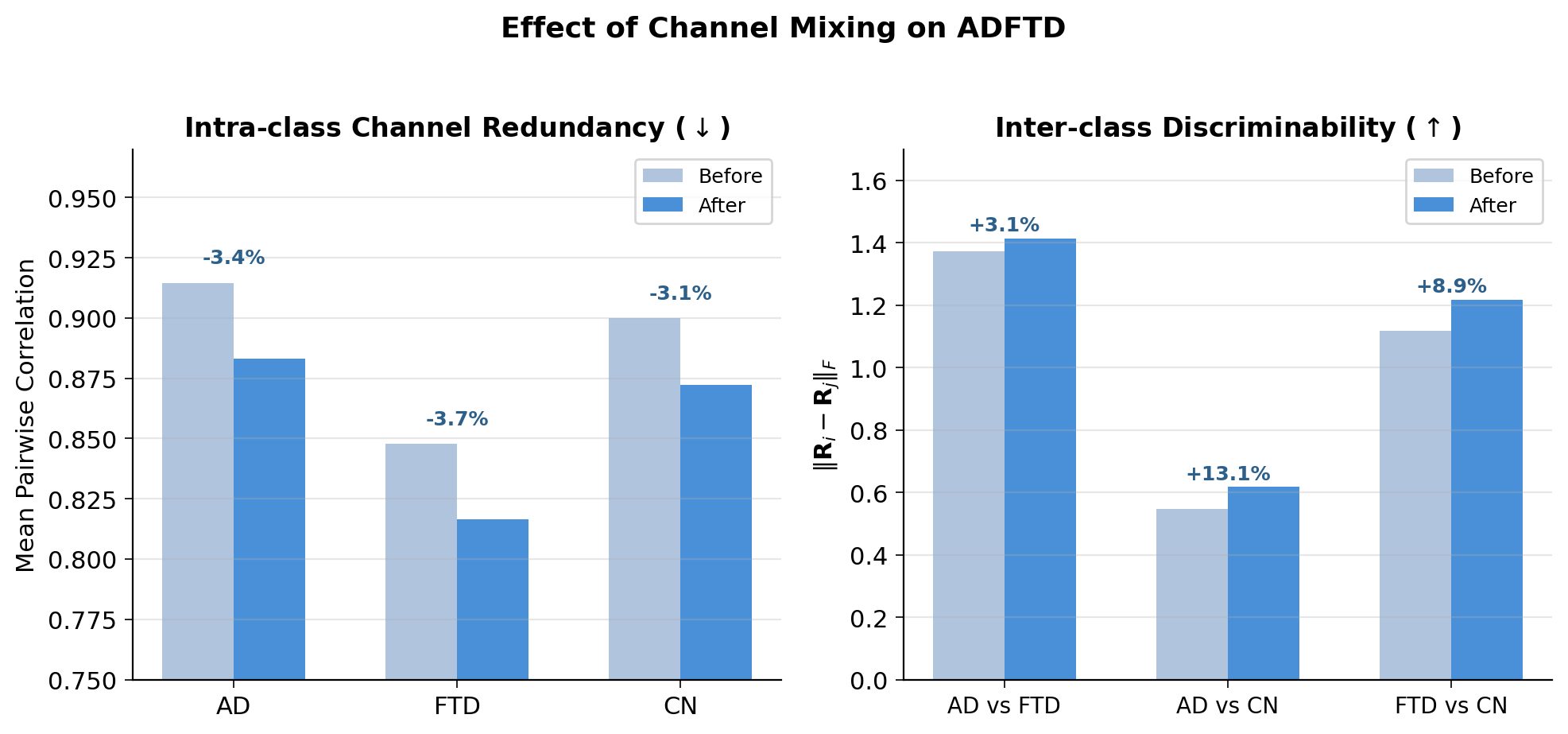}
    \caption{Effect of channel mixing on the ADFTD dataset. \textbf{Left}: mean pairwise Pearson correlation within each class: Alzheimer's Disease (AD), Frontotemporal Dementia (FTD), and Cognitively Normal (CN), where lower values indicate reduced channel redundancy. \textbf{Right}: inter-class discriminability measured by the Frobenius norm of correlation matrix differences, where higher values indicate greater separability. Percentages denote relative change after channel mixing.}
    \label{fig:channel_mix}
\end{figure}

\subsection{Effects of Channel Mixing}
\label{sec:channel_mix_analysis}
To understand how the channel mixing layer transforms the input representation, we analyze the channel correlation structure before and after the learned mixing on the ADFTD test set. Specifically, for each class $c$, we compute the average pairwise Pearson correlation matrix $\mathbf{R}_c \in \mathbb{R}^{C \times C}$ across all samples, and measure inter-class discriminability as the Frobenius norm of the pairwise difference $\|\mathbf{R}_i - \mathbf{R}_j\|_F$.

As shown in Fig.~\ref{fig:channel_mix}, the channel mixing layer produces two complementary effects. First, the mean intra-class channel correlation is consistently reduced across all three classes (AD: $-3.4\%$, FTD: $-3.7\%$, CN: $-3.1\%$), indicating that the layer learns to decorrelate redundant channel information. Second, the inter-class discriminability is simultaneously increased for all class pairs, with the most pronounced improvement observed for the clinically critical AD vs.\ CN comparison ($+13.1\%$), followed by FTD vs.\ CN ($+8.9\%$) and AD vs.\ FTD ($+3.1\%$).

These results suggest that the channel mixing layer acts as a lightweight spatial filter that suppresses shared channel redundancy while amplifying class-specific inter-channel patterns. Notably, the largest improvement occurs for AD vs.\ CN, where distinguishing pathological slowing from healthy brain activity relies heavily on spatial distribution differences across electrode sites. The modest number of additional parameters introduced by channel mixing (a two-layer MLP with expansion ratio $\rho = 2$ applied pointwise across channels) makes this an efficient mechanism for injecting spatial inductive bias without requiring complicated graph structure or electrode position information.
\subsection{Efficiency Analysis}
To evaluate the practical deployment potential of \MedMamba, we compare its computational efficiency against baseline methods in terms of parameter count and inference throughput. All models are benchmarked on the ADFTD dataset using a single NVIDIA RTX 3090 GPU with a batch size of 256.

The core efficiency advantage of \MedMamba{} stems from the linear computational complexity of the selective scan mechanism. While attention mechanisms in Transformer-based models incurs $\mathcal{O}(L^2 D)$ time complexity for a sequence of length $L$ and model dimension $D$, the SSM recurrence in Mamba requires only $\mathcal{O}(LDN)$, where $N$ is the state dimension ($N=16$ in our setting, $N \ll L$). Moreover, the multi-scale patch embedding further reduces the effective sequence length: at scale $m$ with stride $s_m$, the Mamba blocks operate on $L_m = \lfloor (L - s_m) / s_m \rfloor + 1$ tokens rather than the full input length, yielding a total complexity of $\sum_{m=1}^{M} \mathcal{O}(L_m D N)$. For instance, with strides $\{5, 10, 25\}$ on a 256-length input, the patched lengths are $\{51, 25, 10\}$, which are substantially shorter than $L$, further amplifying the efficiency gain.

Fig.~\ref{fig:efficiency} presents a scatter plot of F1-score versus inference speed (samples/sec), where the bubble size is proportional to the number of model parameters. \MedMamba{} achieves the best trade-off between accuracy and efficiency, positioned in the upper-right corner of the plot. Specifically, \MedMamba{} attains an F1-score of 52.0\% with only 5.7M parameters while processing 3.2k samples per second, approximately 2$\times$ faster than Transformer-based baselines (Transformer, Informer, Crossformer, FEDformer) operating at 1.4--1.6k samples/sec, and 4.5$\times$ faster than PatchTST and Autoformer (0.6--0.7k samples/sec). This empirical speedup directly reflects the linear-versus-quadratic complexity gap.

Compared to Medformer, which achieves a slightly lower F1-score of 50.8\%, \MedMamba{} uses 5.3$\times$ fewer parameters (5.7M vs.\ 30.0M) and runs 4.6$\times$ faster (3.2k vs.\ 0.7k samples/sec). The compact parameter footprint is attributable to the SSM parameterization, which encodes temporal dynamics through a low-dimensional state vector rather than full pairwise attention matrices.

Among models with comparable parameter budgets (4.7--5.8M), \MedMamba{} consistently outperforms all alternatives in both F1-score and throughput. These results demonstrate that the proposed architecture is not only accurate but also highly efficient, making it well-suited for real-time clinical applications such as continuous monitoring in intensive care units and edge deployment on wearable devices.

\begin{figure}
    \centering
    \includegraphics[width=1.0\linewidth]{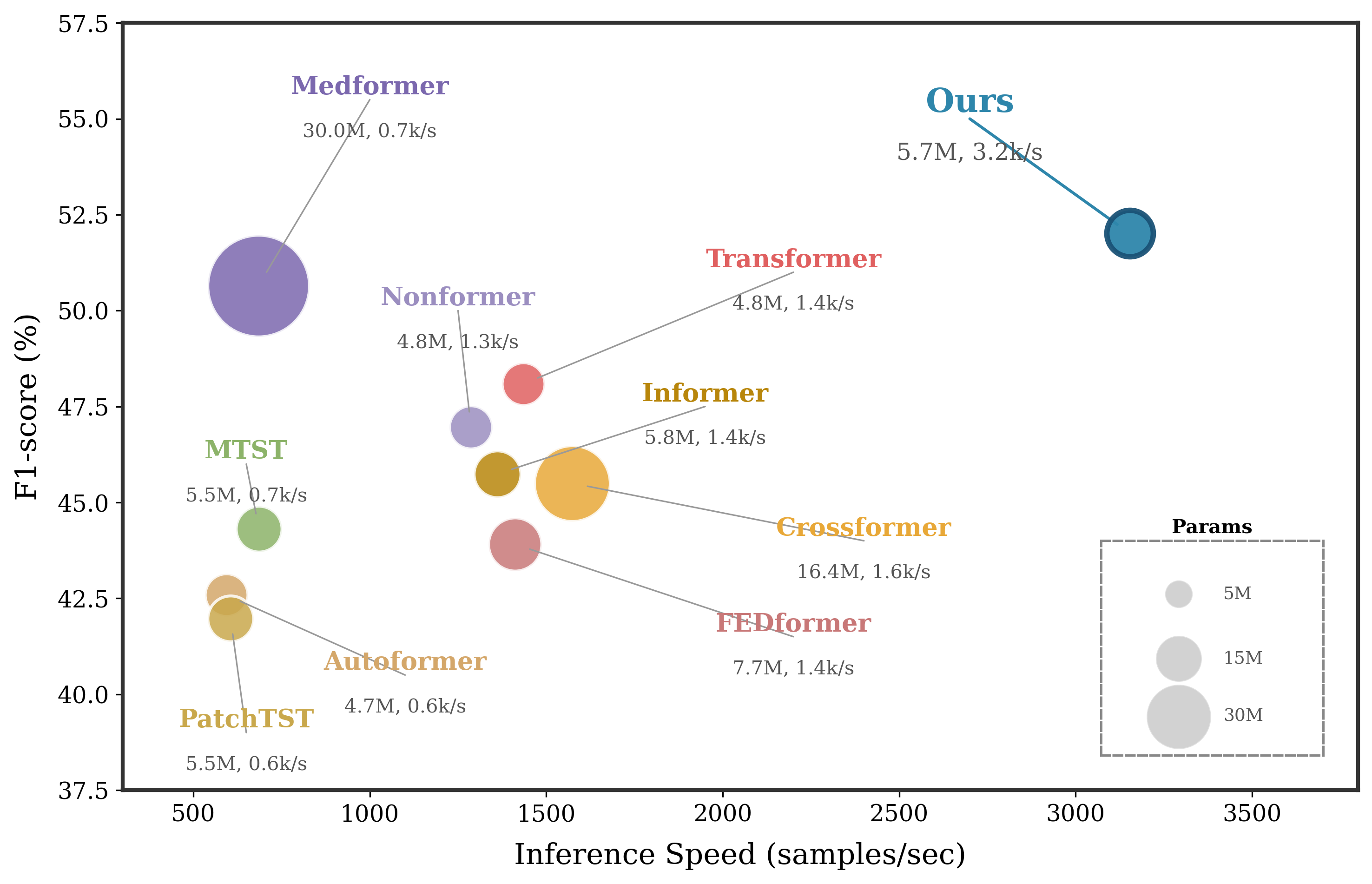}
    \caption{Accuracy--efficiency trade-off on the ADFTD dataset. Each bubble represents a model, with the horizontal axis indicating inference throughput (samples/sec), the vertical axis indicating macro F1-score (\%), and the bubble size proportional to the number of parameters. \MedMamba{} (upper-right) achieves competitive F1-score with the highest throughput and a compact parameter budget.}
    \label{fig:efficiency}
\end{figure}
\vspace{-14pt}

\subsection{t-SNE Visualization}
To showcase the quality of representations learned by \MedMamba , we apply t-SNE~\cite{van2008visualizing} to the representation features extracted from both \MedMamba{} and Medformer on the ADFTD and PTBXL test sets. As illustrated in Fig.~\ref{fig:tsne_comparison}, \MedMamba{} consistently yields more discriminative embeddings across both datasets.

On the three-class ADFTD task, the contrast is striking: \MedMamba{} produces well-separated clusters corresponding to CN, FTD, and AD, with compact intra-class distributions and clear inter-class margins. In contrast, Medformer features collapse into a largely homogeneous cloud with no discernible class structure, indicating that its encoder fails to capture the subtle spectral differences among dementia subtypes.

On the more challenging five-class PTBXL benchmark, \MedMamba{} continues to exhibit stronger class separation. Classes~0 and~1 form distinct, tightly grouped regions, and Class~3 clusters separately in the upper portion of the embedding space. Although Classes~2 and~4 partially overlap in both models, reflecting the inherent diagnostic similarity of certain cardiac conditions. \MedMamba{} still maintains noticeably tighter intra-class groupings compared to Medformer.

These visualizations corroborate the quantitative results in Table~\ref{tab:jbhi_table1} and Table~\ref{tab:jbhi_table2} and suggest that \MedMamba 's \space multi-scale bidirectional state-space modeling captures richer temporal patterns than Transformer-based attention, producing feature representations that are both more separable and more clinically meaningful.
\begin{figure}[!htbp]
    \centering
    \includegraphics[width=1.0\linewidth]{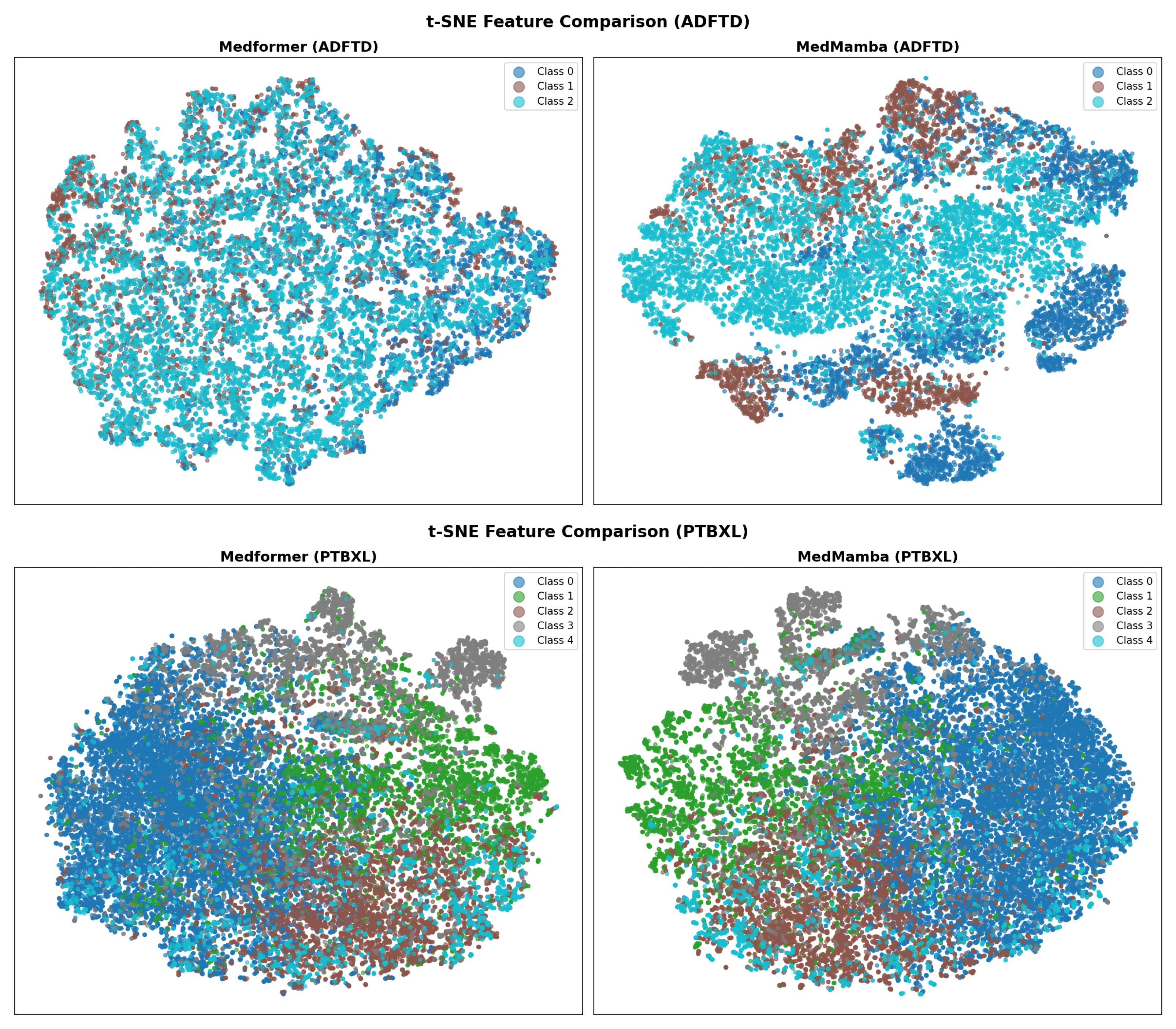}
    \caption{t-SNE visualization of learned feature representations on the ADFTD and PTBXL datasets. Each row corresponds to a dataset, with Medformer (left) and \MedMamba{} (right) shown side by side. On ADFTD (top), \MedMamba{} produces well-separated clusters for CN, FTD, and AD classes, whereas Medformer features remain largely entangled. On PTBXL (bottom), \MedMamba{} maintains clearer inter-class boundaries with more compact intra-class clusters, especially for Class~0 and Class~1, while Medformer exhibits substantial overlap across all five diagnostic categories. These results demonstrate that \MedMamba{} consistently learns more discriminative representations across datasets of varying complexity.}
    \label{fig:tsne_comparison}
\end{figure}
\section{Discussion and Future Directions}
\label{sec:discussion}

\subsection{Why Principle-Driven Design Matters for Medical Time Series}

Unlike generic time series, physiological signals exhibit strong structural regularities, including centralized spatial dependency, multi-timescale composition, and non-causal contextual patterns. 

Our results suggest that explicitly aligning model design with these properties is critical for achieving robust performance. In particular, the channel mixing module acts as a data-adaptive spatial filter, concentrating diagnostically relevant information into a lower-dimensional subspace, which is consistent with the high SCI and DIC values observed in medical signals. 

Similarly, multi-scale tokenization reduces the mismatch between model resolution and physiological event duration, enabling simultaneous modeling of transient patterns (e.g., QRS complexes or EEG spikes) and long-term rhythms (e.g., sleep stages). 

Finally, bidirectional state-space modeling provides a better inductive bias for offline diagnosis, where full temporal context is available, allowing the model to capture symmetric and phase-dependent patterns that are difficult to model with strictly causal architectures.
\subsection{From Architecture Design to Modeling Paradigm}

Beyond architectural improvements, our findings point to a broader modeling paradigm for medical time series. 

Most existing approaches treat physiological signals as generic sequences and rely on generic attention or convolution mechanisms. In contrast, our results suggest that incorporating domain-specific inductive biases is not merely beneficial but necessary. 

In particular, the combination of channel-first reparameterization, multi-rate temporal modeling, and bidirectional inference forms a structured pipeline that better matches the intrinsic organization of physiological signals. 

We believe this principle-driven design paradigm can serve as a general guideline for future models, extending beyond Mamba-based architectures to other sequence modeling frameworks.
\subsection{Implications for Clinical Deployment}

In addition to improved predictive performance, MedMamba offers practical advantages for real-world clinical deployment. 

The linear-time complexity and 4.6× inference speedup make it suitable for large-scale monitoring scenarios, such as wearable devices and sleep analysis systems. 
\subsection{Future directions.}
 Based on \MedMamba , several promising research directions emerge. First, self-supervised pre-training on large-scale unlabeled physiological recordings could improve representation quality. Second, incorporating cross-modal fusion (e.g., jointly modeling ECG and EEG) may capture complementary diagnostic information. Finally, deploying \MedMamba{} in prospective clinical studies with real-time inference constraints would provide the ultimate validation of its practical utility. 

\vspace{-5pt}
\section*{References}
\vspace{-18pt}
\bibliographystyle{IEEEtran}
\balance
\bibliography{JBHI}

\end{document}

%% file: tables/main_results1.tex
\begin{table*}[h]
\centering
\caption{Main results on ECG classification benchmarks}
\label{tab:jbhi_table2}
\renewcommand{\arraystretch}{1.08}
\setlength{\tabcolsep}{4pt}

\resizebox{\textwidth}{!}{%
\begin{tabular}{lccccc|ccccc}
\toprule
\multirow{2}{*}{\textbf{Method}} 
& \multicolumn{5}{c|}{\textbf{PTB (2 Classes)}} 
& \multicolumn{5}{c}{\textbf{PTB-XL (5 Classes)}} \\
\cmidrule(lr){2-6} \cmidrule(lr){7-11}
& \textbf{Accuracy} & \textbf{Precision} & \textbf{Recall} & \textbf{F1-score} & \textbf{AUROC}
& \textbf{Accuracy} & \textbf{Precision} & \textbf{Recall} & \textbf{F1-score} & \textbf{AUROC} \\
\midrule

\textbf{Autoformer}  
& 73.35$\pm$2.10 & 72.11$\pm$2.89 & 63.24$\pm$3.17 & 63.69$\pm$3.84 & 78.54$\pm$3.48 
& 61.68$\pm$2.72 & 51.60$\pm$1.64 & 49.10$\pm$1.52 & 48.85$\pm$2.27 & 82.04$\pm$1.44 \\

\textbf{Crossformer}  
& 80.17$\pm$3.79 & 85.04$\pm$1.83 & 71.25$\pm$6.29 & 72.75$\pm$7.19 & 88.55$\pm$3.45 
& 73.30$\pm$0.14 & 65.06$\pm$0.35 & 61.23$\pm$0.33 & 62.59$\pm$0.14 & 90.02$\pm$0.06 \\

\textbf{FEDformer}  
& 76.05$\pm$2.54 & 77.58$\pm$3.61 & 66.10$\pm$3.55 & 67.14$\pm$4.37 & 85.93$\pm$4.31 
& 57.20$\pm$9.47 & 52.38$\pm$6.09 & 49.04$\pm$7.26 & 47.89$\pm$8.44 & 82.13$\pm$4.17 \\

\textbf{Informer}  
& 78.69$\pm$1.68 & 82.87$\pm$1.02 & 69.19$\pm$2.90 & 70.84$\pm$3.47 & 92.09$\pm$0.53 
& 71.43$\pm$0.32 & 62.64$\pm$0.60 & 59.12$\pm$0.47 & 60.44$\pm$0.43 & 88.65$\pm$0.09 \\

\textbf{iTransformer}  
& 83.89$\pm$0.71 & \textbf{88.25$\pm$1.18} & 76.39$\pm$1.01 & 79.06$\pm$1.06 & 91.18$\pm$1.16 
& 69.28$\pm$0.22 & 59.59$\pm$0.45 & 54.62$\pm$0.18 & 56.20$\pm$0.19 & 86.71$\pm$0.10 \\

\textbf{MTST}  
& 76.59$\pm$1.90 & 79.88$\pm$1.90 & 66.31$\pm$2.95 & 67.38$\pm$3.71 & 86.86$\pm$2.75 
& 72.14$\pm$0.27 & 63.84$\pm$0.72 & 60.01$\pm$0.81 & 61.43$\pm$0.38 & 88.97$\pm$0.33 \\

\textbf{Nonformer}  
& 78.66$\pm$0.49 & 82.77$\pm$0.86 & 69.12$\pm$0.87 & 70.90$\pm$1.00 & 89.37$\pm$2.51 
& 70.56$\pm$0.55 & 61.57$\pm$0.66 & 57.75$\pm$0.72 & 59.10$\pm$0.66 & 88.32$\pm$0.36 \\

\textbf{PatchTST}  
& 74.74$\pm$1.62 & 76.94$\pm$1.51 & 63.89$\pm$2.71 & 64.36$\pm$3.38 & 88.79$\pm$0.91 
& 73.23$\pm$0.25 & 65.70$\pm$0.64 & 60.82$\pm$0.76 & 62.61$\pm$0.34 & 89.74$\pm$0.19 \\

\textbf{Reformer}  
& 77.96$\pm$2.13 & 81.72$\pm$1.61 & 68.20$\pm$3.35 & 69.65$\pm$3.88 & 91.13$\pm$0.74 
& 71.72$\pm$0.43 & 63.12$\pm$1.02 & 59.20$\pm$0.75 & 60.69$\pm$0.18 & 88.80$\pm$0.24 \\

\textbf{Transformer}  
& 77.37$\pm$1.02 & 81.84$\pm$0.66 & 67.14$\pm$1.80 & 68.47$\pm$2.19 & 90.08$\pm$1.76 
& 70.59$\pm$0.44 & 61.57$\pm$0.65 & 57.62$\pm$0.35 & 59.05$\pm$0.25 & 88.21$\pm$0.16 \\

\textbf{Medformer}  
& 83.50$\pm$2.01 & 85.19$\pm$0.94 & 77.11$\pm$3.39 & 79.18$\pm$3.31 & 92.81$\pm$1.48 
& 72.87$\pm$0.23 & 64.14$\pm$0.42 & 60.60$\pm$0.46 & 62.02$\pm$0.37 & 89.66$\pm$0.13 \\

\textbf{Ours}  
& \textbf{85.97$\pm$1.75} & 87.25$\pm$1.25 & \textbf{80.73$\pm$1.72} & \textbf{82.77$\pm$2.19} & \textbf{93.47$\pm$1.93} 
& \textbf{74.39$\pm$0.32} & \textbf{66.50$\pm$0.42} & \textbf{63.44$\pm$0.26} & \textbf{64.55$\pm$0.12} & \textbf{91.08$\pm$0.06} \\

\bottomrule
\end{tabular}%
}
\end{table*}

%% file: tables/main_results2.tex
\begin{table*}[h]
\centering
\caption{Main results on EEG and human activity benchmarks}
\label{tab:jbhi_table1}
\renewcommand{\arraystretch}{1.08}
\setlength{\tabcolsep}{4pt}

\resizebox{\textwidth}{!}{%
\begin{tabular}{cl*{12}{c}}
\toprule
\textbf{Dataset} & \textbf{Metric}
& Autoformer & Crossformer & FEDformer & Informer & iTransformer
& MTST & Nonformer & PatchTST & Reformer & Transformer & Medformer & Ours \\
\midrule

%%%%%%%%%%%%%%%%%%%%%%%%%%%%%%%%%%%%
% ADFTD
%%%%%%%%%%%%%%%%%%%%%%%%%%%%%%%%%%%%
\multirow{5}{*}{\shortstack{\textbf{ADFTD} \\ \textbf{(3 Classes)}}}
& Accuracy  
& 45.25$\pm$1.48 & 50.45$\pm$2.31 & 46.30$\pm$0.59 & 48.45$\pm$1.96 
& 52.60$\pm$1.59 & 45.60$\pm$2.03 & 49.95$\pm$1.05 & 44.37$\pm$0.95 
& 50.78$\pm$1.17 & 50.47$\pm$2.14 & 53.27$\pm$1.54
& \textbf{54.72$\pm$0.87} \\

& Precision 
& 43.67$\pm$1.94 & 45.57$\pm$1.63 & 46.05$\pm$0.76 & 46.54$\pm$1.68 
& 46.79$\pm$1.27 & 44.70$\pm$1.33 & 47.71$\pm$0.97 & 42.40$\pm$1.13 
& 49.64$\pm$1.49 & 49.13$\pm$1.83 & 51.02$\pm$1.57 
& \textbf{52.58$\pm$0.95} \\

& Recall    
& 42.96$\pm$2.03 & 45.88$\pm$1.82 & 44.22$\pm$1.38 & 46.06$\pm$1.84 
& 47.28$\pm$1.29 & 45.05$\pm$1.30 & 47.46$\pm$1.50 & 42.06$\pm$1.48 
& 49.89$\pm$1.67 & 48.01$\pm$1.53 & 50.71$\pm$1.55 
& \textbf{51.91$\pm$0.92} \\

& F1-score  
& 42.59$\pm$1.85 & 45.50$\pm$1.70 & 43.91$\pm$1.37 & 45.74$\pm$1.38 
& 46.79$\pm$1.13 & 44.31$\pm$1.74 & 46.96$\pm$1.35 & 41.97$\pm$1.37 
& 47.94$\pm$0.69 & 48.09$\pm$1.59 & 50.65$\pm$1.51 
& \textbf{52.01$\pm$0.93} \\

& AUROC     
& 61.02$\pm$1.82 & 66.45$\pm$2.03 & 62.62$\pm$1.75 & 65.87$\pm$1.27 
& 67.26$\pm$1.16 & 62.50$\pm$0.81 & 66.23$\pm$1.37 & 60.08$\pm$1.50 
& 69.17$\pm$1.58 & 67.93$\pm$1.59 & 70.93$\pm$1.19 
& \textbf{73.53$\pm$0.94} \\
\midrule

%%%%%%%%%%%%%%%%%%%%%%%%%%%%%%%%%%%%
% APAVA
%%%%%%%%%%%%%%%%%%%%%%%%%%%%%%%%%%%%
\multirow{5}{*}{\shortstack{\textbf{APAVA} \\ \textbf{(2 Classes)}}}
& Accuracy  
& 68.64$\pm$1.82 & 73.77$\pm$1.95 & 74.94$\pm$2.15 & 73.11$\pm$4.40 
& 74.55$\pm$1.66 & 71.14$\pm$1.59 & 71.89$\pm$3.81 & 67.03$\pm$1.65 
& 78.70$\pm$2.00 & 76.30$\pm$4.72 & 78.74$\pm$0.64
& \textbf{82.43$\pm$1.21} \\

& Precision 
& 68.48$\pm$2.10 & 79.29$\pm$4.36 & 74.59$\pm$1.50 & 75.17$\pm$6.06 
& 74.77$\pm$2.10 & 79.30$\pm$0.97 & 71.80$\pm$4.58 & 78.76$\pm$1.28 
& 82.50$\pm$3.95 & 77.64$\pm$5.95 & 81.11$\pm$0.84 
& \textbf{86.82$\pm$0.74} \\

& Recall  
& 68.77$\pm$2.27 & 68.86$\pm$1.70 & 73.56$\pm$3.55 & 69.17$\pm$4.56 
& 71.76$\pm$1.72 & 65.27$\pm$2.17 & 69.44$\pm$3.56 & 59.91$\pm$2.02 
& 75.00$\pm$1.61 & 73.09$\pm$5.01 & 75.40$\pm$0.66 
& \textbf{78.92$\pm$1.52} \\

& F1-score   
& 68.06$\pm$1.94 & 68.93$\pm$1.85 & 73.51$\pm$3.39 & 69.47$\pm$5.06 
& 72.30$\pm$1.79 & 67.01$\pm$1.32 & 69.74$\pm$3.84 & 55.97$\pm$3.05 
& 75.93$\pm$1.73 & 73.75$\pm$5.38 & 76.31$\pm$0.71 
& \textbf{80.17$\pm$1.58} \\

& AUROC  
& 75.94$\pm$3.61 & 82.39$\pm$3.52 & 83.72$\pm$1.97 & 70.46$\pm$4.91 
& 85.59$\pm$1.55 & 68.87$\pm$2.34 & 70.55$\pm$2.96 & 65.65$\pm$2.08 
& 73.94$\pm$1.40 & 73.23$\pm$2.67 & 83.20$\pm$0.91 
& \textbf{89.02$\pm$0.79} \\
\midrule

%%%%%%%%%%%%%%%%%%%%%%%%%%%%%%%%%%%%
% UCI-HAR
%%%%%%%%%%%%%%%%%%%%%%%%%%%%%%%%%%%%
\multirow{5}{*}{\shortstack{\textbf{UCI-HAR} \\ \textbf{(6 Classes)}}}
& Accuracy  & 82.38$\pm$2.31 & 89.74$\pm$1.08 & 90.16$\pm$0.81 & 90.30$\pm$0.36 & 84.30$\pm$1.15 & 89.79$\pm$0.31 & 90.01$\pm$0.47 & 87.11$\pm$1.28 & 88.44$\pm$2.02 & 88.86$\pm$1.65 & 91.65$\pm$0.74 & \textbf{96.16$\pm$0.43}\\
& Precision & 83.94$\pm$1.89 & 89.36$\pm$0.82 & 90.96$\pm$0.76 & 90.26$\pm$0.53 & 84.38$\pm$2.21 & 89.58$\pm$0.48 & 90.19$\pm$0.37 & 87.46$\pm$2.17 & 88.69$\pm$1.87 & 89.03$\pm$0.72 & 91.89$\pm$0.59 & \textbf{96.15$\pm$0.44}\\
& Recall    & 83.29$\pm$1.74 & 89.32$\pm$0.95 & 90.50$\pm$0.58 & 90.31$\pm$0.66 & 84.25$\pm$0.97 & 89.22$\pm$0.66 & 90.14$\pm$0.52 & 87.05$\pm$0.88 & 88.44$\pm$2.02 & 88.86$\pm$1.65 & 91.65$\pm$0.74 & \textbf{96.35$\pm$0.43}\\
& F1-score  & 80.82$\pm$2.04 & 89.70$\pm$1.10 & 90.43$\pm$1.02 & 90.21$\pm$0.79 & 84.28$\pm$0.76 & 89.31$\pm$0.29 & 90.87$\pm$0.19 & 88.88$\pm$1.26 & 88.41$\pm$1.98 & 88.80$\pm$1.67 & 91.61$\pm$0.75 & \textbf{96.17$\pm$0.43}\\
& AUROC     & 94.21$\pm$1.14 & 98.20$\pm$0.41 & 98.36$\pm$0.57 & 98.48$\pm$0.69 & 96.55$\pm$1.02 & 98.35$\pm$0.37 & 97.97$\pm$0.52 & 98.33$\pm$0.47 & 98.21$\pm$0.57 & 98.08$\pm$0.88 & 98.99$\pm$0.34 & \textbf{99.83$\pm$0.02}\\
\midrule

%%%%%%%%%%%%%%%%%%%%%%%%%%%%%%%%%%%%
% SleepEDF（重点修正）
%%%%%%%%%%%%%%%%%%%%%%%%%%%%%%%%%%%%
\multirow{5}{*}{\shortstack{\textbf{SleepEDF} \\ \textbf{(5 Classes)}}}
& Accuracy  & 68.78$\pm$1.23 & 80.76$\pm$0.91 & 69.16$\pm$1.14 & 73.60$\pm$1.02 & 81.68$\pm$0.74 & 82.40$\pm$0.83 & 75.51$\pm$0.96 & 81.35$\pm$0.87 & 72.99$\pm$1.18 & 70.30$\pm$1.27 & \textbf{84.05$\pm$0.76} & 83.33$\pm$0.88  \\
& Precision & 64.53$\pm$1.34 & 71.73$\pm$1.02 & 62.25$\pm$1.29 & 61.83$\pm$1.21 & 72.21$\pm$0.88 & 73.52$\pm$0.85 & 67.49$\pm$1.03 & 71.89$\pm$0.97 & 69.61$\pm$1.11 & 60.87$\pm$1.38 & 72.84$\pm$0.79 & \textbf{76.56$\pm$0.29}  \\
& Recall    & 57.20$\pm$1.41 & 72.27$\pm$1.05 & 58.16$\pm$1.33 & 61.59$\pm$1.26 & 72.83$\pm$0.86 & 73.30$\pm$0.89 & 64.81$\pm$1.08 & 70.36$\pm$0.99 & 59.21$\pm$1.30 & 61.14$\pm$1.24 & \textbf{73.42$\pm$0.82} & 72.42$\pm$1.00 \\
& F1-score  & 59.42$\pm$1.36 & 71.85$\pm$1.01 & 59.10$\pm$1.31 & 60.96$\pm$1.22 & 72.36$\pm$0.82 & 73.16$\pm$0.86 & 63.55$\pm$1.04 & 70.63$\pm$0.93 & 61.92$\pm$1.19 & 60.41$\pm$1.28 & 71.49$\pm$0.84 & \textbf{73.74$\pm$0.67} \\
& AUROC     & 88.38$\pm$0.94 & 94.21$\pm$0.63 & 87.82$\pm$0.97 & 89.14$\pm$0.88 & 93.81$\pm$0.54 & 94.56$\pm$0.51 & 90.57$\pm$0.79 & 94.42$\pm$0.58 & 92.03$\pm$0.71 & 87.82$\pm$0.92 & 95.23$\pm$0.52 & \textbf{96.05$\pm$0.13} \\
\bottomrule
\end{tabular}%
}
\end{table*}

%% file: tables/ablation.tex
\begin{table}[htbp]
\centering
\caption{Ablation study on the ADFTD dataset under the subject-independent setting.
\cmark\ and \xmark\ indicate whether a component is enabled or disabled, respectively.
MS, Bi, and CM denote Multi-Scale, Bidirectional, and Channel Mixing, respectively.}
\label{tab:ablation_vertical}
\renewcommand{\arraystretch}{1.1}
\setlength{\tabcolsep}{4pt}
\small
\begin{tabular}{c l c}
\toprule
\textbf{MS / Bi / CM} & \textbf{Metric} & \textbf{Value} \\
\midrule

\multirow{5}{*}{\shortstack{ \cmark\ /\ \cmark\ /\ \cmark}}
& Accuracy  & \textbf{54.72$\pm$0.87} \\
& F1        & \textbf{52.01$\pm$0.93} \\
& Precision & \textbf{52.58$\pm$0.95} \\
& Recall    & \textbf{51.91$\pm$0.92} \\
& AUROC     & \textbf{73.53$\pm$0.94} \\
\midrule

\multirow{5}{*}{\shortstack{ \xmark\ /\ \cmark\ /\ \cmark}}
& Accuracy  & 51.96$\pm$0.55 {\scriptsize($\downarrow$2.76)} \\
& F1        & 50.25$\pm$0.57 {\scriptsize($\downarrow$1.76)} \\
& Precision & 50.75$\pm$0.36 {\scriptsize($\downarrow$1.83)} \\
& Recall    & 50.20$\pm$0.65 {\scriptsize($\downarrow$1.71)} \\
& AUROC     & 71.63$\pm$0.48 {\scriptsize($\downarrow$1.90)} \\
\midrule

\multirow{5}{*}{\shortstack{ \cmark\ /\ \xmark\ /\ \cmark}}
& Accuracy  & 54.18$\pm$0.88 {\scriptsize($\downarrow$0.54)} \\
& F1        & 51.41$\pm$0.91 {\scriptsize($\downarrow$0.60)} \\
& Precision & 51.92$\pm$0.97 {\scriptsize($\downarrow$0.66)} \\
& Recall    & 51.28$\pm$0.95 {\scriptsize($\downarrow$0.63)} \\
& AUROC     & 72.85$\pm$0.90 {\scriptsize($\downarrow$0.68)} \\
\midrule

\multirow{5}{*}{\shortstack{ \cmark\ /\ \cmark\ /\ \xmark}}
& Accuracy  & 53.68$\pm$0.79 {\scriptsize($\downarrow$1.04)} \\
& F1        & 50.98$\pm$0.82 {\scriptsize($\downarrow$1.03)} \\
& Precision & 51.41$\pm$0.88 {\scriptsize($\downarrow$1.17)} \\
& Recall    & 50.87$\pm$0.80 {\scriptsize($\downarrow$1.04)} \\
& AUROC     & 72.21$\pm$0.70 {\scriptsize($\downarrow$1.32)} \\
\bottomrule
\end{tabular}
\end{table}